\newcommand{\cmark}{\ding{51}}%
\newcommand{\xmark}{\ding{55}}%
\newcommand\domain{\mathcal{X}}
\newcommand\dss{DSS}
\newtheorem{theorem}{Theorem}
\newtheorem{lemma}{Lemma}
\theoremstyle{definition}
\newtheorem{definition}{Definition}
\definecolor{ikgreen}{RGB}{200,255,200}
\definecolor{ikred}{RGB}{255,200,200}
\definecolor{mymaroon}{HTML}{800000}
\definecolor{comgray}{HTML}{BEBEBE}
\definecolor{citegreen}{HTML}{458B00}
\providecommand{\keywords}[1]{\textbf{{Keywords.}} #1}
\begin{document}




\author{Hassan Jameel Asghar$^{1, 2}$, Ming Ding$^1$, Thierry Rakotoarivelo$^1$, Sirine Mrabet$^1$,\\
Mohamed Ali Kaafar$^{1, 2}$\\\\
\small $^1$Data61, CSIRO, Sydney, Australia\\
\small \texttt{\{hassan.asghar, ming.ding, thierry.rakotoarivelo, sirine.mrabet, dali.kaafar\}@data61.csiro.au}\\
\small $^2$Macquarie University, Sydney, Australia\\
\small \texttt{\{hassan.asghar, dali.kaafar\}@mq.edu.au}
}

\title{Differentially Private Release of High-Dimensional Datasets using the Gaussian Copula}

\maketitle

%
%
%
%

\begin{abstract}
We propose a generic mechanism to efficiently release differentially private synthetic versions of high-dimensional datasets with high utility. The core technique in our mechanism is the use of copulas, which are functions representing dependencies among random variables with a multivariate distribution. Specifically, we use the Gaussian copula to define dependencies of attributes in the input dataset, whose rows are modelled as samples from an unknown multivariate distribution, and then sample synthetic records through this copula. Despite the inherently numerical nature of Gaussian correlations we construct a method that is applicable to both numerical and categorical attributes alike. Our mechanism is efficient in that it only takes time proportional to the square of the number of attributes in the dataset. We propose a differentially private way of constructing the Gaussian copula without compromising computational efficiency. Through experiments on three real-world datasets, we show that we can obtain highly accurate answers to the set of all one-way marginal, and two-and three-way positive conjunction queries, with 99\% of the query answers having absolute (fractional) error rates between 0.01 to 3\%. Furthermore, for a majority of two-way and three-way queries, we outperform independent noise addition through the well-known Laplace mechanism. In terms of computational time we demonstrate that our mechanism can output synthetic datasets in around 6 minutes 47 seconds on average with an input dataset of about 200 binary attributes and more than 32,000 rows, and about 2 hours 30 mins to execute a much larger dataset of about 700 binary attributes and more than 5 million rows. To further demonstrate scalability, we ran the mechanism on larger (artificial) datasets with 1,000 and 2,000 binary attributes (and 5 million rows) obtaining synthetic outputs in approximately 6 and 19 hours, respectively. These are highly feasible times for synthetic datasets, which are one-off releases. 
\end{abstract}

\keywords{differential privacy, synthetic data, high dimensional, copula}

\section{Introduction}
\label{sec:intro}
There is an ever increasing demand to release and share datasets owing to its potential benefits over controlled access. For instance, once data is released, data custodian(s) need not worry about access controls and continual support.
From a usability perspective, data release is more convenient for users (expert analysts and novices alike) as compared to access through a restricted interface. Despite its appeal, sharing datasets, especially when they contain sensitive information about individuals, has privacy implications which have been well documented. Current practice, therefore, suggests privacy-preserving release of datasets. Ad hoc techniques such as de-identification, which mainly rely on properties of datasets and assumptions on what background information is available, have failed to guarantee privacy~\cite{netflix, ohm-broken}. Part of the reason for the failure is the lack of a robust definition of privacy underpinning these techniques.

This gave rise to the definition of differential privacy~\cite{calib-noise}. Differential privacy ties the privacy property to the process or algorithm (instead of the dataset) and, informally, requires that any output of the algorithm be almost equally likely even if any individual's data is added or removed from the dataset. A series of algorithms have since been proposed to release differentially private datasets, often termed as synthetic datasets. Simultaneously, there are results indicating that producing synthetic datasets which accurately answer a large number of queries is computationally infeasible, i.e., taking time exponential in the number of attributes (dimension of the dataset)~\cite{pcp-hard}. This is a serious roadblock as real-world datasets are often high-dimensional. However, infeasibility results in~\cite{pcp-hard,ullman-n2} are generic, targeting \textit{provable} utility for \textit{any} input data distribution. It may well be the case that a large number of real-world datasets follow constrained distributions which would make it computationally easier to output differentially private synthetic datasets that can accurately answer a larger number of queries. Several recent works indicate the plausibility of this approach claiming good utility in practice~\cite{privbayes, dp-copula}. Ours is a continuation of this line of work.

In this paper, we present a \emph{generic} mechanism to \emph{efficiently} generate differentially private synthetic versions of high dimensional datasets with \emph{good utility}. By efficiency, we mean that our mechanism can output a synthetic dataset in time $O(m^2 n)$, where $m$ is the total number of attributes in the dataset and $n$ the total number of rows. Recall that the impossibility results~\cite{pcp-hard} suggest that algorithms for accurately answering a large number of queries are (roughly) expected to run in time $\text{poly}(2^m, n)$. Thus, our method is scalable for high dimensional datasets, i.e., having a large $m$. In terms of utility, our generated synthetic dataset is designed to give well approximated answers to all one-way marginal and two-way positive conjunction queries. One-way marginal queries return the number of rows in the dataset exhibiting a given value $x$ or its negation $\overline{x}$ (not $x$) under any attribute $X$.  Similarly, two-way positive conjunction queries return the number of rows that exhibit any pair of values $(x, y)$ under an attribute pair $(X, Y)$. This forms a subset of all two-way margins; the full set also includes negations of values, e.g., rows that satisfy $(x, \overline{y})$. While this may seem like a small subset of queries, there are results showing that even algorithms that generate synthetic datasets to (accurately) answer all two-way marginal queries are expected to be computationally inefficient~\cite{pcp-hard, ullman-n2}. Furthermore, we show that our mechanism provides good utility for other queries as well by evaluating the answers to 3-way positive conjunction queries on the synthetic output. 

The key to our method is the use of copulas~\cite{nelsen} to generate synthetic datasets. Informally, a copula is a function that maps the marginal distributions to the joint distribution of a multivariate distribution, thus defining dependence among random variables. Modelling the rows of the input dataset as samples of the (unknown) multivariate distribution, we can use copulas to define dependence among attributes (modelled as univariate random variables), and finally use it to sample rows from the target distribution and generate synthetic datasets. Specifically, we use the Gaussian copula~\cite[p. 23]{nelsen}, which defines the dependence between one-way margins through a covariance matrix. The underlying assumption is that the relationship between different attributes of the input dataset is completely characterised by pairwise covariances. While this may not be true in practice, it still preserves the correlation of highly correlated attributes in the synthetic output. Our main reason for using the Gaussian copula is its efficiency, as its run-time is proportional to square of the number of attributes. We remark that our use of the Gaussian copula \emph{does not} require the data attributes to follow a Gaussian distribution. Only their dependencies are assumed to be captured by the Gaussian copula.

Importantly, as claimed, our method is generic. This is important since an input dataset is expected to be a mixture of numerical (ordinal) and categorical (non-ordinal) attributes meaning that we cannot use a unified measure of correlation between attributes. A common technique to work around this is to create an artificial order on the categorical attributes~\cite{tax-tree}. However, as we show later, the resulting correlations are inherently artificial and exhibit drastically different results if a new arbitrary order is induced. Our approach is to convert the dataset into its binary format (see Section~\ref{sub:dummy})
and then use a single correlation measure, Pearson's product-moment correlation, for all binary attributes. This eliminates the need for creating an artificial order on the categorical attributes. Our method is thus more generic than a similar method proposed in~\cite{dp-copula}, which only handles attributes with large (discrete) domains and induces artificial order on categorical attributes.\footnote{See Section~\ref{sec:rw} for a further discussion on the differences between the two works.}

We experimentally evaluate our method on three real-world datasets: the publicly available Adult dataset \cite{adult} containing US census information, a subset of the social security payments dataset provided to us by the Department of Social Services (DSS), a department of the Government of Australia,\footnote{\url{https://www.dss.gov.au/}.}, and a hospital ratings dataset extracted from a national patient survey in the United States, which we call the Hospital dataset. The Adult dataset consists of 14 attributes (194 in binary) and more than 32,000 rows, the {\dss} dataset has 27 attributes (674 in binary) and more than 5,000,000 rows, and the Hospital dataset contains 9 attributes (1,201 in binary) and 10,000 rows. Generation of synthetic datasets took around 6 minutes 47 seconds (on average) for the Adult dataset, around two and a half hours on average for the {\dss} dataset, and 46 minutes on average for the Hospital dataset. To further check the scalability of our method, we ran it on two artificial datasets each with more than 5 million rows and 1,000 and 2,000 binary attributes, resulting in run-times of approximately 6 and 19 hours, respectively. Since synthetic data release is a one-off endeavour, these times are highly feasible. In terms of utility, we show that 99\% of the queries have an absolute error of less than 500 on Adult, less than 1,000 on {\dss} and around 300 on the Hospital dataset, where absolute error is defined as the absolute difference between true and differentially private answers computed from the synthetic dataset. In terms of two-way positive conjunction queries, we again see that 99\% of the queries have an absolute error of around 500 for Adult, less than 500 for {\dss} and only around 20 for the Hospital dataset. Furthermore, for most of the two-way queries we considerably outperform the Laplace mechanism~\cite{calib-noise} of adding independent Laplace noise to each of the queries. Note that a further advantage of our method is that unlike the Laplace mechanism, we generate a synthetic dataset. We further expand our utility analysis to include three-way queries and show that our method again calibrates noise considerably better than the aforementioned Laplace mechanism with 99\% of the queries having absolute error of around 400 for both Adult and {\dss}, and only around 50 for the Hospital dataset. Our utility analysis is thorough; we factor in the possibility that real-world datasets may have significantly high number of uncorrelated attributes which implies that a synthetic data generation algorithm might produce accurate answers by chance (for the case of two-way or higher order marginals). Thus we separate results for highly correlated and uncorrelated attributes to show the accuracy of our mechanism. Perhaps one drawback of our work is the lack of a theoretical accuracy bound; but this is in line with many recent works~\cite{privbayes, psd, dp-copula} which, like us, promise utility in practice.

The rest of the paper is organized as follows. We give a brief background on differential privacy and copulas in Section~\ref{sec:bg}. We describe our mechanism in Section~\ref{sec:method}. Section~\ref{sec:exp} contains our experimental utility and performance analysis. We present related work in Section~\ref{sec:rw}, and conclude in Section~\ref{sec:conclude}. 
\section{Background concepts}
\label{sec:bg}

\subsection{Notations}

We denote the original dataset by $D$, which is modelled as a multiset of $n$ rows from the domain $\domain = A_1 \times \cdots \times A_m$,
where each $A_j$ represents an attribute, for a total of $m$ attributes. We assume the number of rows $n$ is \emph{publicly known}. We denote the set of all $n$-row databases as $\domain^n$. Thus, $D \in \domain^n$.
The $i$th row of $D$ is denoted as $(X^{(i)}_1, X^{(i)}_2, \ldots, X^{(i)}_m)$,
where $X^{(i)}_j \in A_j$.
In the sequel,
where discussing a generic row,
we will drop the superscript for brevity. The notation $\overline{\mathbb{R}}$ represents the real number line $[-\infty, \infty]$, and $\mathbb{I}$ denotes the interval of real numbers $[0, 1]$. The indicator function $I\{P\}$ evaluates to $1$ if the predicate $P$ is true, and $0$ otherwise.

\subsection{Dummy Coding}
\label{sub:dummy}
A key feature of our method is to convert the original dataset into its binary format through dummy coding, where each attribute value of $A_j \in D$ is assigned a new binary variable. For instance, consider the ``country'' attribute having attribute values \texttt{USA}, \texttt{France} and \texttt{Australia}. Converted to binary we will have three columns (binary attributes) one for each of the three values. For any row, a value of 1 corresponding to any of these binary columns indicates that the individual is from that particular country. Thus, we have a total of $d = \sum_{i = 1}^m | A_i |$ binary attributes in the binary version $D_{\mathsf{B}}$ of the dataset $D$. In case of continuous valued attributes, this is done by first binning the values into discrete bins. Thus for a continuous valued attribute $A$, $|A|$ indicates the number of bins. The $i$th row in $D_{\mathsf{B}}$ is denoted as $(X^{(i)}_1, X^{(i)}_2, \ldots, X^{(i)}_d)$. Note that while the mapping from $D$ to $D_{\mathsf{B}}$ is unique, the converse is not always true.\footnote{For instance, in $D_{\mathsf{B}}$, we may have two binary attributes set to 1, which correspond to two different values of the same attribute in the original dataset $D$, e.g., \texttt{USA} and \texttt{France}.}  Note further that this way of representing datasets is also known as histogram representation~\cite[\S 2.3]{dp-book}.


\subsection{Overview of Differential Privacy}
Two databases $D, D' \in \domain^n$ are called neighboring databases if they differ in only one row.
\begin{definition}[Differential privacy~\cite{calib-noise, dp-book}]
A randomized algorithm (mechanism) $\mathcal{M}: \domain^n \rightarrow R$ is $(\epsilon, \delta)$-differentially private if for every $S \subseteq R$, and for all neighbouring databases $D, D' \in \domain^n$, the following holds \[ \mathbb{P} (\mathcal{M}(D) \in S ) \le e^{\epsilon} \mathbb{P} (\mathcal{M}(D') \in S) + \delta. \] If $\delta = 0$, then $\mathcal{M}$ is $\epsilon$-differentially private.
\end{definition}
The mechanism $\mathcal{M}$ might also take some auxiliary inputs such as a query or a set of queries (to be defined shortly). The parameter $\delta$ is required to be a negligible function of $n$ \cite[\S 2.3, p. 18]{dp-book},\cite[\S 1.6, p. 9]{salil-tut}.\footnote{A function $f$ in $n$ is negligible, if for all $c \in \mathbb{N}$, there exists an $n_0 \in \mathbb{N}$ such that for all $n \ge n_0$, it holds that $f(n) < n^{-c}$.} The parameter $\epsilon$ on the other hand should be small but not arbitrarily small. We may think of $\epsilon \le 0.01$, $\epsilon \le 0.1$ or $\epsilon \le 1$~\cite[\S 1]{perplexed}, \cite[\S 3.5.2, p. 52]{dp-book}.
An important property of differential privacy is that it composes~\cite{dp-book}.
\begin{theorem}[Basic composition]
\label{the:basic-comp}
If $\mathcal{M}_1, \ldots, \mathcal{M}_k$ are each $(\epsilon, \delta)$-differentially private then $\mathcal{M} = ( \mathcal{M}_1, \ldots, \mathcal{M}_k)$ is $(k\epsilon, k\delta)$-differentially private. \qed
\end{theorem}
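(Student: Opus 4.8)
The plan is to prove the statement by induction on $k$, so that it suffices to establish the two-fold case for possibly different parameters: if $\mathcal{M}_1$ is $(\epsilon_1,\delta_1)$- and $\mathcal{M}_2$ is $(\epsilon_2,\delta_2)$-differentially private, with independent internal randomness and output spaces $R_1$ and $R_2$, then $(\mathcal{M}_1,\mathcal{M}_2)$ is $(\epsilon_1+\epsilon_2,\delta_1+\delta_2)$-differentially private. The general claim then follows by applying this step $k-1$ times, composing $(\mathcal{M}_1,\ldots,\mathcal{M}_j)$ --- which is $(j\epsilon,j\delta)$-private by the inductive hypothesis --- with the $(\epsilon,\delta)$-private $\mathcal{M}_{j+1}$. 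Fix neighbouring $D,D'\in\domain^n$ and a measurable set $S\subseteq R_1\times R_2$. Writing $S_{r_1}=\{r_2:(r_1,r_2)\in S\}$ for the slice of $S$ at $r_1$, independence lets me factor the joint probability as $\mathbb{P}((\mathcal{M}_1(D),\mathcal{M}_2(D))\in S)=\int_{R_1}\mathbb{P}(\mathcal{M}_2(D)\in S_{r_1})\,d\mu_D(r_1)$, where $\mu_D$ is the output distribution of $\mathcal{M}_1(D)$. The proof then chains the two privacy guarantees against this integral.

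As a warm-up, the pure case $\delta=0$ is immediate: the joint output density (or mass) factorises, and the pointwise bounds $\mathbb{P}(\mathcal{M}_i(D)=r_i)\le e^{\epsilon_i}\mathbb{P}(\mathcal{M}_i(D')=r_i)$ multiply to give the factor $e^{\epsilon_1+\epsilon_2}$, which integrates over $S$ to the desired inequality. For $\delta>0$ the difficulty is that the $(\epsilon,\delta)$ guarantee is a statement about sets, not a pointwise density bound, so it cannot simply be multiplied. The route I would first try is the algebraic identity $p_1p_2-e^{\epsilon_1+\epsilon_2}q_1q_2=p_2\,(p_1-e^{\epsilon_1}q_1)+e^{\epsilon_1}q_1\,(p_2-e^{\epsilon_2}q_2)$, where $p_i,q_i$ are the densities of $\mathcal{M}_i(D),\mathcal{M}_i(D')$; taking positive parts, using $(a+b)_+\le a_++b_+$, and integrating (with $\int p_2=\int q_1=1$) reduces the excess mass of the joint to the individual excess masses $\int(p_i-e^{\epsilon_i}q_i)_+$, each of which is at most $\delta_i$ by the single-mechanism guarantee.

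The main obstacle is that this naive chaining produces a failure term of the form $\delta_1+e^{\epsilon_1}\delta_2$ rather than the clean $\delta_1+\delta_2$, because the second excess mass is weighted by $e^{\epsilon_1}q_1$. To recover exactly the stated bound I would instead pass through the equivalent \emph{privacy-loss} characterisation of $(\epsilon,\delta)$-differential privacy: a mechanism is $(\epsilon,\delta)$-private essentially iff, for neighbouring inputs, the log-likelihood ratio (the privacy loss) of its output exceeds $\epsilon$ only on an event of probability at most $\delta$. Under this formulation composition is transparent: the privacy losses of $\mathcal{M}_1,\ldots,\mathcal{M}_k$ add on the intersection of the $k$ ``good'' events, giving total loss at most $k\epsilon$, while a union bound over the $k$ ``bad'' events accounts for the total failure probability $k\delta$. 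Establishing --- and correctly invoking --- the equivalence between the set-based definition used here and this privacy-loss formulation is the delicate part of the argument; once it is in place, the additive composition of both $\epsilon$ and $\delta$ follows directly, completing the induction.
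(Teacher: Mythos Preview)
The paper does not prove this theorem; it is stated as a standard result from the differential-privacy literature and closed immediately with \qed. So there is no in-paper proof to compare your proposal against.

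On the proposal itself: your overall plan and your diagnosis of the obstacle are correct, but the fix needs sharpening. The characterisation you invoke --- ``$(\epsilon,\delta)$-DP essentially iff the privacy loss exceeds $\epsilon$ with probability at most $\delta$'' --- fails in the direction you need. From $(\epsilon,\delta)$-DP one obtains only $\int (p-e^{\epsilon}q)_+\le\delta$, i.e.\ $\mathbb{P}_p(B)-e^{\epsilon}\,\mathbb{P}_q(B)\le\delta$ on the bad set $B=\{p>e^{\epsilon}q\}$, which does \emph{not} give $\mathbb{P}_p(B)\le\delta$; so the union-bound step over ``bad events'' does not go through as written. The equivalence that actually works (Lemma~3.17 in the Dwork--Roth monograph) is the coupling version: $\mathcal{M}$ is $(\epsilon,\delta)$-DP iff for every neighbouring pair there is a random variable within total-variation distance $\delta$ of $\mathcal{M}(D)$ that satisfies the \emph{pointwise} $e^{\epsilon}$ density-ratio bound against $\mathcal{M}(D')$. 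With that formulation composition is clean: replace each $\mathcal{M}_i(D)$ by its $\delta$-close surrogate, multiply the pointwise bounds to get the factor $e^{k\epsilon}$, and sum the TV distances to get total slack $k\delta$. Your instinct that ``establishing the equivalence is the delicate part'' is exactly right --- just make sure it is this coupling characterisation, not the bad-event one, that you invoke.
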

The above is sometimes referred to as (basic) \emph{sequential} composition as opposed to parallel composition defined next.
\begin{theorem}[Parallel composition~\cite{mcsherry-par-comp}]
\label{the:par-comp}
Let $\mathcal{M}_i$ each provide $(\epsilon, \delta)$-differential privacy. Let $\domain_i$ be arbitrary disjoint subsets of the domain $\domain$. The sequence of $\mathcal{M}_i(D \cap \domain_i)$ provides $(\epsilon, \delta)$-differential privacy, where $D \in \domain^n$.\qed
\end{theorem}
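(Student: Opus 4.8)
The plan is to show that the composite mechanism $\mathcal{M}(D) = (\mathcal{M}_1(D \cap \domain_1), \ldots, \mathcal{M}_k(D \cap \domain_k))$ inherits the $(\epsilon, \delta)$ guarantee of its individual components with no blow-up in the privacy parameters, the point being that disjointness of the $\domain_i$ localises the effect of any single individual. First I would fix neighbouring databases $D, D' \in \domain^n$ and isolate the structural fact at the heart of the theorem: since the subsets $\domain_i$ are pairwise disjoint, the single row in which $D$ and $D'$ differ lies in at most one of them. Consequently $D \cap \domain_i = D' \cap \domain_i$ for every index except (at most) one, call it $i^\star$, for which $\mathcal{M}_{i^\star}$ receives a pair of neighbouring inputs while every other $\mathcal{M}_i$ receives identical inputs.

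Next I would make the randomness explicit. Assuming, as is standard, that the $\mathcal{M}_i$ use independent coins, the output coordinates $Y_i = \mathcal{M}_i(D \cap \domain_i)$ are mutually independent, and for each $i \ne i^\star$ the law of $Y_i$ is identical under $D$ and $D'$. I would then fix an arbitrary measurable event $S \subseteq R_1 \times \cdots \times R_k$ and, because $S$ need not be a product set, handle it by conditioning on the unchanged coordinates. Writing $\mu$ for the (common) joint distribution of $(Y_i)_{i \ne i^\star}$ and, for each realisation $y_{-i^\star}$ of those coordinates, letting $S_{y_{-i^\star}} \subseteq R_{i^\star}$ denote the corresponding section of $S$, independence gives $\mathbb{P}(\mathcal{M}(D) \in S) = \int \mathbb{P}(Y_{i^\star} \in S_{y_{-i^\star}} \mid D)\, d\mu(y_{-i^\star})$, and likewise for $D'$ against the same $\mu$.

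The differential privacy of the single affected mechanism then does all the work. Applying the $(\epsilon, \delta)$ bound of $\mathcal{M}_{i^\star}$ pointwise to each section gives $\mathbb{P}(Y_{i^\star} \in S_{y_{-i^\star}} \mid D) \le e^\epsilon\, \mathbb{P}(Y_{i^\star} \in S_{y_{-i^\star}} \mid D') + \delta$, and integrating against $\mu$ — crucially using $\int d\mu = 1$ so that the additive term stays $\delta$ rather than being inflated — yields $\mathbb{P}(\mathcal{M}(D) \in S) \le e^\epsilon\, \mathbb{P}(\mathcal{M}(D') \in S) + \delta$, which is exactly the claim. Note that the proof spends the budget only once, not $k$ times, precisely because all but one factor is unaffected.

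The step I expect to be the real obstacle is justifying that only one partition changes. This is immediate under the add/remove notion of neighbours, where a single individual is inserted or deleted and thus touches a single $\domain_i$. Under the strict ``differ in one row'' (replacement) convention used in the definition above, however, a replacement could move a record from $\domain_j$ into a distinct $\domain_k$, perturbing two restrictions at once and naively costing a factor of two in $(\epsilon, \delta)$. Making the theorem hold as stated therefore requires either reading neighbours in the add/remove sense (as in the original formulation) or arranging the partition so that membership is fixed by attributes the neighbouring relation leaves unchanged; I would state this assumption explicitly before running the argument above.
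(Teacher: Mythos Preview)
The paper does not prove this theorem at all: it is stated as a background result, cited to McSherry, and closed with \qed\ immediately after the statement. There is therefore no paper proof to compare against; your task was to supply the argument the paper omits.

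Your argument is the standard one and is correct as written. Isolating the unique index $i^\star$ whose restriction changes, using independence of the mechanisms' coins to factor the joint law, and then conditioning on the unchanged coordinates so that the $(\epsilon,\delta)$ bound is applied once to each section of $S$ and integrated against a probability measure (so the additive $\delta$ is not inflated) is exactly how parallel composition is established.

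Your closing caveat is well taken and, in the context of this paper, actually bites. The paper defines neighbours as databases in $\domain^n$ that ``differ in only one row,'' i.e.\ the replacement model, yet invokes Theorem~\ref{the:par-comp} in the proof of Lemma~\ref{lem:perturb_CDF}, where the disjoint subsets $\domain_i$ are the level sets of a single attribute $A$. A replacement that changes the value of $A$ moves the altered record from one $\domain_i$ to another, so two restrictions become neighbouring pairs and the naive bound is $(2\epsilon,2\delta)$. Your proposed resolution---either read neighbours in the add/remove sense (as McSherry does) or observe that the constant $2$ is already absorbed by the histogram-sensitivity argument the paper cites from \cite[\S 3.3]{dp-book}---is the right way to reconcile the statement with its use here, and it would be worth making that explicit.
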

A more advanced form of composition, which we shall use in this paper, only depletes the ``privacy budget'' by a factor of $\approx \sqrt{k}$. See~\cite[\S 3.5.2]{dp-book} for a precise definition of adaptive composition.
\begin{theorem}[(Advanced) adaptive composition~\cite{boosting, dp-book, dual-query}]
\label{the:comp}
Let $\mathcal{M}_1, \ldots, \mathcal{M}_k$ be a sequence of mechanisms where each can take as input the output of a previous mechanism, and let each be $\epsilon'$-differentially private. Then $\mathcal{M}(D) =  (\mathcal{M}_1(D), \ldots, \mathcal{M}_k(D)$ is $\epsilon$-differentially private for $\epsilon = k\epsilon'$, and $(\epsilon, \delta)$-differentially private for
\[
\epsilon = \sqrt{2k \ln{\frac{1}{\delta}}} \epsilon' + k \epsilon' (e^{\epsilon'} - 1),
\]
for any $\delta \in (0, 1)$. Furthermore, if the mechanisms $\mathcal{M}_i$ are each $(\epsilon', \delta')$ differentially private, then $M$ is $(\epsilon, k\delta' + \delta)$ differentially private with the same $\epsilon$ as above.\qed
\end{theorem}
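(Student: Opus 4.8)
The plan is to dispatch the pure statement first and then treat the advanced bound as the real work. The claim that $\mathcal{M}$ is $k\epsilon'$-differentially private is just Theorem~\ref{the:basic-comp} specialised to $\delta = 0$: adaptivity does not harm basic composition, since the standard proof already conditions on the previously revealed outputs. So the substance is the $(\epsilon,\delta)$ bound. First I would fix neighbouring databases $D, D'$ and introduce the \emph{privacy loss random variable}. For an outcome tuple $o = (o_1,\ldots,o_k)$ drawn from $\mathcal{M}(D)$, define $L(o) = \ln\frac{\mathbb{P}(\mathcal{M}(D) = o)}{\mathbb{P}(\mathcal{M}(D') = o)}$. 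Because the mechanisms are run adaptively, the joint probability of $o$ factorises into the product of the conditional probabilities of each $o_i$ given $o_1,\ldots,o_{i-1}$, so the logarithm telescopes into a sum $L = \sum_{i=1}^{k} L_i$, where $L_i$ is the per-step privacy loss conditioned on the history.

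Next I would bound the summands. Since each $\mathcal{M}_i$ is $\epsilon'$-differentially private, every $L_i$ satisfies $|L_i| \le \epsilon'$ almost surely, regardless of the conditioning. Then I would invoke the standard lemma that the \emph{expected} privacy loss of an $\epsilon'$-DP mechanism is small, giving $\mathbb{E}[L_i \mid o_1,\ldots,o_{i-1}] \le \epsilon'(e^{\epsilon'}-1)$; this follows by bounding the relevant KL-type quantity using only boundedness of $L_i$. With these two facts in hand, the centred partial sums of $L_i - \mathbb{E}[L_i \mid \text{history}]$ form a martingale with bounded increments, and an application of Azuma's inequality shows that, except on an event of probability at most $\delta$,
\[
L \le k\epsilon'(e^{\epsilon'}-1) + \sqrt{2k\ln\tfrac{1}{\delta}}\,\epsilon' = \epsilon.
\]

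It then remains to convert this tail bound into an $(\epsilon,\delta)$ guarantee. I would use the standard equivalence: if $\mathbb{P}_{o\sim\mathcal{M}(D)}[L(o) > \epsilon] \le \delta$, then $\mathcal{M}$ is $(\epsilon,\delta)$-DP. Concretely, for any measurable $S$, split it into the part where $L \le \epsilon$ and its complement; on the former the two output distributions differ multiplicatively by at most $e^{\epsilon}$, while the latter contributes at most $\delta$, yielding $\mathbb{P}(\mathcal{M}(D)\in S) \le e^{\epsilon}\mathbb{P}(\mathcal{M}(D')\in S) + \delta$. For the final clause, where each $\mathcal{M}_i$ is only $(\epsilon',\delta')$-DP, I would reduce to the pure case: every $(\epsilon',\delta')$-DP mechanism is $\delta'$-close (in total variation) to an $\epsilon'$-DP one, so the approximate slack can be isolated into a per-mechanism failure event of probability at most $\delta'$. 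A union bound over the $k$ steps adds $k\delta'$ to the existing $\delta$, giving $(\epsilon, k\delta'+\delta)$-DP with the same $\epsilon$.

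The hard part will be the concentration step. Adaptivity forces the increments to depend on the revealed history, so the bound must be phrased as a martingale rather than a sum of independent variables, and the centring by the per-step expectation is essential — a naive Azuma bound using only the increment range $2\epsilon'$ gives a worse constant. Extracting \emph{exactly} the coefficient $\epsilon'$ in front of $\sqrt{2k\ln(1/\delta)}$ requires the careful range-and-expectation accounting that is the technical heart of the Dwork--Rothblum--Vadhan argument. The second delicate point is the $(\epsilon',\delta')$ reduction: one must argue that the total-variation slack can be quarantined in a low-probability event and absorbed by a union bound without perturbing the martingale analysis used for the $\delta$ term.
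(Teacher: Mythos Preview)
The paper does not prove this theorem at all: the statement is quoted as background from~\cite{boosting, dp-book, dual-query} and closed with a \qed, so there is nothing to compare your proposal against in the paper itself. Your sketch is the standard Dwork--Rothblum--Vadhan argument (privacy loss as a martingale of bounded increments, Azuma after centring by the $\epsilon'(e^{\epsilon'}-1)$ per-step mean, then the tail-to-$(\epsilon,\delta)$ conversion and a union bound for the approximate case), which is precisely what the cited references contain and is correct in outline.
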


Another key feature of differential privacy is that it is immune to post-processing.
\begin{theorem}[Post-processing~\cite{dp-book}]
\label{the:post-proc}
If $\mathcal{M} : \domain^n \rightarrow R$ is $(\epsilon, \delta)$-differentially private and $f : R \rightarrow R'$ is any randomized function, then $f \circ \mathcal{M} : \domain^n \rightarrow R'$ is $(\epsilon, \delta)$-differentially private.\qed
\end{theorem}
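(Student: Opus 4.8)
The plan is to prove the claim first for \emph{deterministic} post-processing maps $f$, and then lift the result to arbitrary randomized $f$ by writing the randomized map as a mixture of deterministic ones. The crucial structural fact used throughout is that the internal randomness of $f$ is independent of the randomness of $\mathcal{M}$, so the two sources of randomness can be handled separately.

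First I would handle the deterministic case. Fix neighbouring databases $D, D' \in \domain^n$ and an arbitrary (measurable) event $S' \subseteq R'$, and let $T = f^{-1}(S') = \{ r \in R : f(r) \in S' \} \subseteq R$ be its preimage. Since $f$ is deterministic, the events $\{(f \circ \mathcal{M})(D) \in S'\}$ and $\{\mathcal{M}(D) \in T\}$ coincide, and likewise for $D'$. Applying the definition of $(\epsilon,\delta)$-differential privacy to $\mathcal{M}$ with the event $T$ then yields
\[
\mathbb{P}((f \circ \mathcal{M})(D) \in S') = \mathbb{P}(\mathcal{M}(D) \in T) \le e^{\epsilon}\, \mathbb{P}(\mathcal{M}(D') \in T) + \delta = e^{\epsilon}\, \mathbb{P}((f \circ \mathcal{M})(D') \in S') + \delta,
\]
which is exactly the required guarantee.

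For the randomized case, I would model $f$ as first drawing internal coins $\omega$ from some distribution $\mu$ (independent of $\mathcal{M}$) and then applying a deterministic map $f_\omega : R \to R'$. Conditioning on $\omega$ reduces each slice to the deterministic case above, so that $\mathbb{P}(f_\omega(\mathcal{M}(D)) \in S') \le e^{\epsilon}\, \mathbb{P}(f_\omega(\mathcal{M}(D')) \in S') + \delta$ holds for every fixed $\omega$. Integrating this inequality against $\mu$ and using independence to pull the expectation through gives
\[
\mathbb{P}((f \circ \mathcal{M})(D) \in S') = \int \mathbb{P}(f_\omega(\mathcal{M}(D)) \in S')\, d\mu(\omega) \le e^{\epsilon} \int \mathbb{P}(f_\omega(\mathcal{M}(D')) \in S')\, d\mu(\omega) + \delta,
\]
and the right-hand side equals $e^{\epsilon}\, \mathbb{P}((f \circ \mathcal{M})(D') \in S') + \delta$, completing the argument.

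The two algebraic manipulations are routine; the part that needs the most care is the \emph{measure-theoretic bookkeeping} when $R$ and $R'$ are continuous. Specifically, I must ensure that each preimage $f_\omega^{-1}(S')$ is a measurable subset of $R$ so that the privacy inequality for $\mathcal{M}$ applies to it, and that the map $\omega \mapsto \mathbb{P}(f_\omega(\mathcal{M}(D)) \in S')$ is measurable so that integrating against $\mu$ and interchanging the integral with the probability operator (via independence and Fubini) are justified. In the discrete setting these technical concerns vanish and the mixture-over-deterministic-maps argument is entirely elementary.
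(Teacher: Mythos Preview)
Your proof is correct and is essentially the standard argument given in the cited reference~\cite{dp-book}. However, the paper itself does not supply a proof of this theorem: it is stated with a citation and a \qed\ marker, indicating the result is quoted as background and the reader is referred to Dwork and Roth for the argument. So there is nothing in the paper to compare your proposal against; your write-up simply fills in what the paper leaves to the reference, and it does so in the same way the reference does (deterministic case via preimages, then average over the coins of a randomized $f$).
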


A query is defined as a function $q: \domain^n \rightarrow \mathbb{R}$.
\begin{definition}[Counting queries and point functions]
\label{def:pt}
A \emph{counting} query is specified by a predicate $q: \domain \rightarrow \{0, 1\}$ and extended to datasets $D \in \domain^n$ by summing up the predicate on all $n$ rows of the dataset as
\[
q(D) = \sum_{x \in D} q(x).
\]
A point function~\cite{salil-tut} is the sum of the predicate $q_y: \domain \rightarrow \{0, 1\}$, which evaluates to $1$ if the row is equal to the point $y \in \domain$ and $0$ otherwise, over the dataset $D$. Note that computing all point functions, i.e., answering the query $q_y(D)$ for all $y \in \domain$, amounts to computing the histogram of the dataset $D$.
\qed
\end{definition}

\begin{definition}[Global sensitivity~\cite{dp-book, salil-tut}]
The global sensitivity of a counting query $q: \domain^n \rightarrow \mathbb{N}$ is
\[
\Delta q = \max_{\substack{D, D' \in \domain^n \\ D \sim D'}} | q(D) - q(D') |.
\]
\qed
\end{definition}

\begin{definition}[$(\alpha, \beta)$-utility]
The mechanism $\mathcal{M}$ is said to be $(\alpha, \beta)$-useful for the query class $Q$ if for any $q \in Q$,
\[
\mathbb{P} \left[ \mathsf{util} (\mathcal{M}(q, D), q(D)) \le \alpha \right] \ge 1- \beta,
\]
where the probability is over the coin tosses of $\mathcal{M}$, and $\mathsf{util}$ is a given metric for utility.
\end{definition}

The Laplace mechanism is employed as a building block in our algorithm to generate the synthetic dataset.
\begin{definition}[Laplace mechanism~\cite{calib-noise}]
\label{def:laplace}
The Laplace distribution with mean $0$ and scale $b$ has the probability density function
\[
\text{Lap}(x \mid b) = \frac{1}{2b}e^{-\frac{|x|}{b}}.
\]
We shall remove the argument $x$, and simply denote the above by $\text{Lap}(b)$. Let $q: \domain^n \rightarrow \mathbb{R}$ be a query. The mechanism
\[
\mathcal{M}_{\text{Lap}}(q, D, \epsilon) = q(D) + Y
\]
where $Y$ is drawn from the distribution $\text{Lap}\left(\Delta q/ \epsilon \right)$ is known as the Laplace mechanism.
The Laplace mechanism is $\epsilon$-differentially private~\cite[\S 3.3]{dp-book}. Furthermore, with probability at least $1 - \beta$~\cite[\S 3.3]{dp-book} \[ \max_{q \in Q} | q(D) - \mathcal{M}_{\text{Lap}}(q, D, \epsilon) | \le \frac{\Delta q}{\epsilon} \ln \left(\frac{1}{\beta} \right) \doteq \alpha, \] where $\beta \in (0, 1]$. If $q$ is a counting query, then $\Delta q = 1$, and the error $\alpha$ is \[ \alpha = \frac{1}{\epsilon} \ln \left(\frac{1}{\beta} \right) \] with probability at least $1 - \beta$. \qed
\end{definition}


\subsection{Overview of Copulas}
For illustration, we assume the bivariate case, i.e., we have only two variables (attributes) $X_1$ and $X_2$. Let $F_1$ and $F_2$ be their margins, i.e., cumulative distribution functions (CDFs), and let $H$ be their joint distribution function. Then, according to Sklar's theorem~\cite{sklar, nelsen}, there exists a copula $C$ such that for all $X_1, X_2 \in \overline{\mathbb{R}}$,
\[
H(X_1, X_2) = C(F_1(X_1), F_2(X_2)),
\]
which is unique if $F_1$ and $F_2$ are continuous; otherwise it is uniquely determined on $\text{Ran}(F_1) \times \text{Ran}(F_2)$, where $\text{Ran}(\cdot)$ denotes range. In other words, there is a function that maps the joint distribution function to each pair of values of its margins. This function is called a copula. Since our treatment is on binary variables $X \in \{0, 1\}$, the corresponding margins are defined over $X \in \overline{\mathbb{R}}$ as
\[
F(X) =
\begin{cases}
0, & X \in [-\infty, 0) \\
a, & X \in [0, 1) \\
1, & X \in [1, \infty]
\end{cases}
\]
where, $a \in \mathbb{I}$. The above satisfies the definition of a distribution function~\cite[\S 2.3]{nelsen}. This allows us to define the \emph{quasi-inverse} of $F$, denoted $F^{-1}$, (tailored to our case) as
\[
F^{-1}(t) =
\begin{cases}
0, & t \in [0, a] \\
1, & t \in (a, 1]
\end{cases}
\]
Now, using the quasi-inverses, we see that
\[
C(u, v) = H(F_1^{-1}(u), F_2^{-1}(v)),
\]
where $u, v \in \mathbb{I}$. If an analytical form of the joint distribution function is known, the copula can be constructed from the expressions of $F_1^{-1}$ and $F_2^{-1}$ (provided they exist). This then allows us to generate random samples of $X_1$ and $X_2$ by first sampling a uniform $u \in \mathbb{I}$, and then extracting $v$ from the conditional distribution. Using the inverses we can extract the pair $X_1, X_2$. See~\cite[\S 2.9]{nelsen} for more details. However, in our case, and in general for real-world datasets, we seldom have an analytical expression for $H$, which could allow us to construct $C$. There are two approaches to circumvent this, which rely on obtaining an empirical estimate of $H$ from $D_{\mathsf{B}}$.

\begin{itemize}
  \item The first approach relies on constructing a discrete equivalent of an empirical copula~\cite[\S 5.6, p. 219]{nelsen}, using the empirical $H$ obtained from the dataset $D_\mathsf{B}$. However, doing this in a differentially private manner requires computing answers to the set of all point functions (Definition~\ref{def:pt}) of the original dataset~\cite{salil-tut}, which amounts to finding the histogram of $D_\mathsf{B}$. Unfortunately, for high dimensional datasets, existing \emph{efficient} approaches of differentially private histogram release~\cite{salil-tut} would release a private dataset that discards most rows of the original dataset. This is due to the fact that high dimensional datasets are expected to have high number of rows with low multiplicity.
  \item The other approach, and indeed the one taken by us, is using some existing copula and adjusting its parameters according to the dataset $D_\mathsf{B}$. For this paper we choose the Gaussian copula, i.e., the copula
\[
C(u, v) = \mathbf{\Phi}_{r}( \Phi^{-1}(u), \Phi^{-1}(v)),
\]
where $\Phi$ is the standard univariate normal distribution, $r$ is the Pearson correlation coefficient and $\mathbf{\Phi}_r$ is the standard bivariate normal distribution with correlation coefficient $r$. We can then replace the $\Phi$'s with the given marginals $F$ and $G$ resulting in a distribution which is not standard bivariate, if $F$ and $G$ are not standard normal themselves. The underlying assumption in this approach is that the given distribution $H$ is completely characterised by the margins and the correlation. This in general may not be true of all distributions $H$. However, in practice, this can potentially provide good estimates of one way margins and a subset of the two way margins as discussed earlier. Another advantage of this approach is computational efficiency: the run time is polynomial in $m$ (the number of attributes) and $n$ (the number or rows).
\end{itemize}

While our introduction to copulas has focused on two variables, this can be extended to multiple variables~\cite{sklar}.

\section{Proposed Mechanism}
\label{sec:method}
Our method has three main steps as shown in Figure~\ref{fig:flowchart}: data pre-processing, differentially private statistics, and copula construction. Among these, only the second step involves privacy treatment. The last step preserves privacy due to the post-processing property of differential privacy (Theorem~\ref{the:post-proc}). The first step, if not undertaken carefully, may result in privacy violations, as we shall discuss shortly. We shall elaborate each step in the following.

\begin{figure*}[!htb]
\centering
\includegraphics[width=0.9\textwidth]{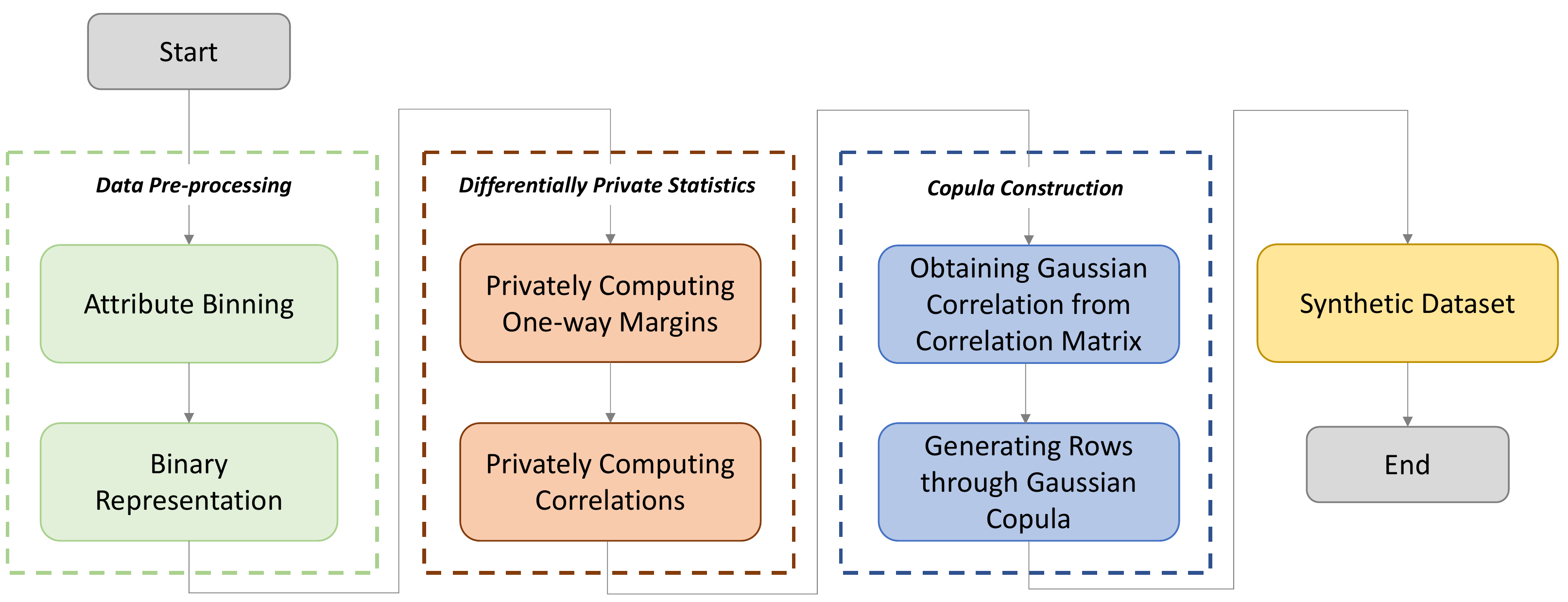}
\caption{Flowchart of our method.}
\label{fig:flowchart}
\end{figure*}


\subsection{Data Pre-processing}
\label{s:preproc}

\subsubsection{Binning of Continuous and Ordinal Data}
Since we will convert the original data into its binary format we need to ensure that the resulting expansion, i.e., $d = \sum_{i = 1}^m | A_i |$, does not become prohibitively large. This will in general be the case with continuous attributes or discrete valued attributes with large domains. To overcome this, we bin these attributes into discrete intervals. However, care must be taken to ensure that the resulting binning is done independent of the input dataset $D$ to avoid privacy violations. For instance, data-specific binning could introduce a bin due to an outlier which would reveal the presence and absence of that particular individual, violating differential privacy. We, therefore, assume that this binning is done only through what is publicly known about each attribute. For instance, when binning the year of birth into bins of 10 years, we assume that most individuals in the dataset have an age less than 120 (without checking this in the dataset). This of course depends on the nature of the attributes. While binning can be done in a differentially private manner, we use a simpler approach as this is not the main theme of our paper. We also note that data binning as a pre-processing step is almost always adopted in related work, e.g.,~\cite{dual-query}.

\subsubsection{Binary Representation}
In order to use the Gaussian copula, we need to determine correlations between attributes via some fixed correlation \emph{coefficient}; which is a numeric measure. We use the Pearson product moment correlation coefficient for this purpose. This means that in order to measure correlations, either the attribute values in the dataset need to be real valued or have to mapped to a real number. These mapped values then follow their order in the set of real numbers. If an attribute is ordinal, e.g., age or salary, a natural order exists. However, categorical attributes do not have a natural order, e.g., gender. One way to solve this conundrum is to induce an artificial order among the values of the categorical attribute, e.g., through a hierarchical tree~\cite{tax-tree}. However, this ordering is inherently artificial and any change in order results in a markedly different correlation. An illustration of this fact is shown in Appendix~\ref{app:cor-art-ord} (we also briefly discuss this in Section~\ref{sec:rw}).\footnote{There are other correlation measures that can be used for categorical attributes, e.g., Cram{\'e}r's V~\cite{cramer}. However, we need a unified measure that is applicable to all attributes alike.} Our approach instead is to work on the binary version of the dataset obtained via dummy coding, as explained in Section~\ref{sub:dummy}. Pairwise correlations now amount to finding Pearson correlation between pairs of binary variables. This way of interpreting data makes no distinction between ordinal and categorical variables by not assigning any order to the latter.

\subsection{Differentially Private Statistics}

\subsubsection{Privately Computing One-Way Margins}
\label{subsub:priv-comp-marg}

The first ingredient in our method is the one-way margins, i.e., marginal CDFs, of the attributes $X_i$, for $i \in \left\{ 1,2,\ldots,d\right\}$. We denote these margins by $\hat{F}_{i}\left(x\right)$,
where $x \in \{0, 1\}$. Since each $X_i$ is binary, these margins can be calculated as
\begin{align*}
\hat{F}_{i}\left(0\right) = \frac{\sum_{k=1}^{n} I \{ X_{i}^{(k)} = 0 \}} {n}, \quad \hat{F}_{i}\left(1\right) = 1
\end{align*}

To make $\hat{F}_{i}\left(x\right)$ differentially private, we add Laplace noise to the counts $\hat{n}_{i}^{0} = \sum_{k=1}^{n} I \{ X_{i}^{(k)} = 0 \}$ and $\hat{n}_{i}^{1} \sum_{k=1}^{n} I \{ X_{i}^{(k)} = 1 \}$ to obtain $\tilde{F}_{i}\left(x\right)$, which is summarized in Algorithm~\ref{alg:algo_DP_emp_CDF}. If the differentially private sum is negative, we fix it to 0. Note that this utilizes the post-processing property (cf. Theorem~\ref{the:post-proc}) and hence maintains privacy. The reason we add noise to both $\hat{n}_{i}^{0}$ and $\hat{n}_{i}^{1}$ instead of just adding noise to $\hat{n}_{i}^{0}$ is to avoid the noisy $\hat{n}_{i}^{0}$ exceeding $n$.

\begin{algorithm}
\caption{\label{alg:algo_DP_emp_CDF} Obtaining Differentially Private Marginals $\tilde{F}_{i}\left(x\right)$}

\begin{enumerate}
\item For the $i$th attribute, count the numbers of events $X_{i}=0$
and $X_{i}=1$ as
\begin{equation*}
\hat{n}_{i}^{0}=\sum_{k=1}^{n} I \left\{ X_{i}^{(k)} =0 \right\},  \quad \hat{n}_{i}^{1}=\sum_{k=1}^{n} I \left\{ X_{i}^{(k)} =1 \right\}
\end{equation*}
\item Add Laplace noise to $\hat{n}_{i}^{0}$ and $\hat{n}_{i}^{1}$, and obtain the noisy counts as
\begin{equation*}
\tilde{n}_{i}^{0}=\hat{n}_{i}^{0}+\text{Lap}\left(\frac{2}{\epsilon'_i}\right), \quad \tilde{n}_{i}^{1}=\hat{n}_{i}^{1}+\text{Lap}\left(\frac{2}{\epsilon'_i}\right)
\end{equation*}
where $\epsilon'_i$ is the privacy budget associated with computing the $i$th margin.
\item Obtain $\tilde{F}_{i}\left(x\right)$ as
\begin{equation*}
\tilde{F}_{i}\left(0\right)=\frac{\tilde{n}_{i}^{0}}{\tilde{n}_{i}^{0} + \tilde{n}_{i}^{1}}, \quad \tilde{F}_{i}\left(1\right)=1
\end{equation*}
\end{enumerate}
\end{algorithm}

Algorithm~\ref{alg:algo_DP_emp_CDF} is $(\epsilon'_i, 0)$-differentially private due to differential privacy of the Laplace mechanism and the fact that the algorithm is essentially computing the histogram associated with attribute $X_i$~\cite[\S 3.3, p. 33]{dp-book}.
Importantly, the privacy budget $\epsilon'_i$ is impacted only by the number of attributes $m$ in the original dataset $D$ and not by the number of binary attributes $d$ in the binary version $D_{\mathsf{B}}$.
This is shown by the following lemma.

\begin{lemma}
\label{lem:perturb_CDF}
Fix an attribute $A$ in $D$, and let $X_1, \ldots, X_{|A|}$ denote the binary attributes constructed from $A$. Then if the computation of each marginal $X_j$, $j \in [|A|]$, is $(\epsilon'_i, 0)$-differentially private, the computation of the marginal $A$ is $(\epsilon'_i, 0)$-differentially private.
\end{lemma}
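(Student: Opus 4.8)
The plan is to exploit the fact that the binary attributes $X_1, \ldots, X_{|A|}$ obtained from a single attribute $A$ are not arbitrary binary columns but a one-hot encoding: in every row of $D$ exactly one of them equals $1$, namely the one indexing that row's value of $A$. Consequently these attributes induce a partition of the domain into disjoint cells $\domain_1, \ldots, \domain_{|A|}$, where $\domain_j$ collects all rows whose $A$-value is the $j$th one, and $X_j^{(k)} = 1$ iff the $k$th row lies in $\domain_j$. This partition structure is exactly what lets us replace the naive sequential accounting, which via basic composition (Theorem~\ref{the:basic-comp}) would charge $|A|\,\epsilon'_i$, by parallel composition, which charges only $\epsilon'_i$.

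First I would rewrite the per-attribute computation so that it is visibly a function of disjoint data. For each $j$, the count $\hat{n}_j^1 = \sum_k I\{X_j^{(k)} = 1\} = |D \cap \domain_j|$ depends only on the rows in cell $\domain_j$, and since $n$ is public we have $\hat{n}_j^0 = n - \hat{n}_j^1$, so the second count is determined by the same cell together with public information. Hence Algorithm~\ref{alg:algo_DP_emp_CDF} applied to $X_j$ can be viewed as a mechanism $\mathcal{M}_j$ acting on $D \cap \domain_j$, which by hypothesis is $(\epsilon'_i, 0)$-differentially private; equivalently, the collection $\{\hat{n}_j^1\}_j$ is just the histogram of the original attribute $A$ over the cells $\domain_j$. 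I would then invoke parallel composition (Theorem~\ref{the:par-comp}): since the $\domain_j$ are pairwise disjoint and each $\mathcal{M}_j$ is $(\epsilon'_i, 0)$-differentially private, the joint release $(\mathcal{M}_1(D\cap\domain_1), \ldots, \mathcal{M}_{|A|}(D\cap\domain_{|A|}))$ — which is precisely the computation of all binary marginals of $A$ — is $(\epsilon'_i, 0)$-differentially private. Reconstructing each $\tilde{F}_j$ from the noisy counts is then free by post-processing (Theorem~\ref{the:post-proc}). It is worth noting that this is consistent with the sensitivity viewpoint: the $\ell_1$-sensitivity of the histogram of $A$ is $2$ (a single replaced row decrements one bin and increments another), and the scale $\text{Lap}(2/\epsilon'_i)$ used in Algorithm~\ref{alg:algo_DP_emp_CDF} is calibrated exactly to this, yielding $\epsilon'_i$ irrespective of $|A|$.

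The step I expect to need the most care is justifying that each cell's computation truly touches only that cell's rows. The apparent obstacle is the count $\hat{n}_j^0$, which at first glance reads off rows outside $\domain_j$; the resolution is the identity $\hat{n}_j^0 = n - \hat{n}_j^1$ together with the public knowledge of $n$, so that no cross-cell information beyond the public row count is used. The related delicate point is the neighbouring relation: replacing one row of $D$ moves it out of the cell of its old $A$-value and into the cell of its new $A$-value, touching two cells rather than one, so I must phrase the argument at the level of the histogram of $A$ (one record contributing to one bin at a time), which is the form in which parallel composition is sound and in which the calibrated sensitivity of $2$ is preserved. Once this reduction to a single histogram over disjoint cells is in place, the conclusion that the marginal of $A$ costs $\epsilon'_i$ rather than $|A|\,\epsilon'_i$ follows immediately, which is the content of the lemma.
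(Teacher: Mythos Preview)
Your proposal is correct and follows essentially the same approach as the paper: invoke parallel composition (Theorem~\ref{the:par-comp}) via the one-hot partition $\domain_1,\ldots,\domain_{|A|}$, and back this up with the equivalent histogram-of-$A$ viewpoint (sensitivity $2$, matching the $\text{Lap}(2/\epsilon'_i)$ scale) together with post-processing. Your treatment is in fact more careful than the paper's in flagging the two-cell issue under the replacement neighbouring relation and resolving it by passing to the histogram formulation.
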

\begin{proof}
See Appendix~\ref{app:lem:perturb_CDF}.
\end{proof}

\subsubsection{Privately Computing Correlations}
The other requirement of our method is the computation of pairwise correlations given by
\begin{equation}
\label{eq:emp-cor}
\hat{r}_{i,j}=\frac{\mathbb{E}\left\{ X_{i}X_{j}\right\} -\mu_{i}\mu_{j}}{\sqrt{\textrm{var}\left(X_{i}\right)\textrm{var}\left(X_{j}\right)}}, \quad  i, j \in \{ 1,2,\ldots,d\}.
\end{equation}
To obtain the differentially private version of $\hat{r}_{i,j}$, denoted $\tilde{r}_{i,j}$, one way is to compute $\hat{r}_{i,j}$ directly from $D_{\mathsf{B}}$ and then add Laplace noise scaled to the sensitivity of $\hat{r}$. However, as we show in Appendix~\ref{app:cor-high-sen}, the empirical correlation coefficient from binary attributes has high global sensitivity, which would result in highly noisy $\tilde{r}_{i,j}$. The other approach is to compute each of the terms in Eq.~\ref{eq:emp-cor} in a differentially private manner and then obtain $\tilde{r}_{i,j}$. Notice that for binary attribute $X_i$, its mean $\mu_i$ is given by $\hat{n}_{i}^{1}/n$. This can be obtained differentially privately as
\[
\tilde{\mu}_i = 1 - \tilde{F}_{i}\left(0\right)=1 - \frac{\tilde{n}_{i}^{0}}{\tilde{n}_{i}^{0} + \tilde{n}_{i}^{1}},
\]
which we have already computed. Likewise, the variance $\hat{\text{var}}(X_i)$ is given by $\hat{\mu}_i (1 - \hat{\mu}_i)$, whose differentially private analogue, i.e.,  $\tilde{\text{var}}(X_i)$, can again be obtained from the above. Thus, the only new computation is the computation of $\mathbb{E}\left\{ X_{i}X_{j}\right\}$, which for binary attributes is equivalent to computing $\frac{1}{n} \sum_{k = 1}^{n} I \{ (X_i^{(k)} = 1) \wedge (X_j^{(k)} = 1) \}$. Algorithm~\ref{alg:algo_DP_emp_joint_pair_CDF} computes this privately.

\begin{algorithm}
\caption{\label{alg:algo_DP_emp_joint_pair_CDF} Obtaining Differentially Private Two-Way Positive Conjunctions $\tilde{\mathbb{E}}\left\{ X_{i}X_{j}\right\}$}

\begin{enumerate}
\item For the $i$th and $j$th attributes  ($1 \le i < j \le d$), and for $a, b \in \{0, 1\}$, count the number of events $(X_{i}, X_{j}) = (a, b)$ as
\begin{equation}
\hat{n}_{i,j}^{ab}=\sum_{k=1}^{n} I \left\{ (X_{i}^{(k)} = a) \cap (X_{j}^{(k)} = b) \right\}
\end{equation}
\item Add Laplace noise onto $\hat{n}_{i,j}^{ab}$ for $a, b \in \{0, 1\}$ and obtain the noisy count as
\begin{equation}
\tilde{n}_{i,j}^{ab}=\hat{n}_{i,j}^{ab}+\text{Lap}\left(\frac{2}{\epsilon''_{i,j}}\right),
\end{equation}
where $\epsilon''_{i,j}$ is the privacy budget associated with computing the $(i, j)$th two-way marginal.
\item Obtain $\tilde{\mathbb{E}}\left\{ X_{i}X_{j}\right\}$ as
\begin{equation}
\tilde{\mathbb{E}}\left\{ X_{i}X_{j}\right\} = \frac{\tilde{n}_{i,j}^{ab}}{\sum_{a, b \in \{0, 1\} }\tilde{n}_{i,j}^{ab}}.
\end{equation}
\end{enumerate}
\end{algorithm}

Algorithm~\ref{alg:algo_DP_emp_joint_pair_CDF} is $(\epsilon''_{i,j}, 0)$-differentially private due to the differential privacy of the Laplace mechanism and the fact that the algorithm is essentially computing the histogram associated with attribute pairs $(X_i, X_j)$~\cite[\S 3.3, p. 33]{dp-book}.
Once again, the privacy budget $\epsilon''_{i, j}$ is impacted only by the number of pairs of attributes $\binom{m}{2}$ in the original dataset $D$ and not by the number of pairs of binary attributes $\binom{d}{2}$ in the binary version $D_{\mathsf{B}}$.
This is presented in the following lemma,
whose proof is similar to that of Lemma~\ref{lem:perturb_CDF}, and hence omitted for brevity.

\begin{lemma}
\label{lem:perturb_2CDF}
Fix two attributes $A_i$ and $A_j$, $i \ne j$, in $D$. Let $X_{i, 1}, \ldots, X_{i, |A_i|}$ and  $X_{j, 1}, \ldots, X_{j, |A_j|}$ denote the binary attributes constructed from $A_i$ and $A_j$, respectively. Then, if the computation of each of the two-way marginals $(X_{i, k}, X_{j, k'})$, $k \in [|A_i|], k' \in [|A_j|]$, is $(\epsilon''_{i,j}, 0)$-differentially private, the computation of all all two-way marginals of $A_i$ and $A_j$ is $(\epsilon''_{i,j}, 0)$-differentially private.\qed
\end{lemma}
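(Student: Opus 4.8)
The plan is to mirror the proof of Lemma~\ref{lem:perturb_CDF}, lifting its one-dimensional partition argument to the two-dimensional joint domain of $A_i$ and $A_j$ and then invoking parallel composition (Theorem~\ref{the:par-comp}). The guiding observation is that dummy coding makes the binary attributes one-hot: for every row exactly one of $X_{i,1}, \ldots, X_{i,|A_i|}$ equals $1$ (recording the value of $A_i$), and likewise exactly one of $X_{j,1}, \ldots, X_{j,|A_j|}$ equals $1$. Consequently each row activates exactly one positive conjunction $(X_{i,k}, X_{j,k'}) = (1,1)$, namely the one with $A_i = a_k$ and $A_j = a_{k'}$, so these conjunctions index a genuine partition of the data rather than overlapping events.

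First I would define the cells $\domain_{k,k'} = \{x \in \domain : A_i(x) = a_k \wedge A_j(x) = a_{k'}\}$ for $k \in [|A_i|]$, $k' \in [|A_j|]$, and check that they are pairwise disjoint and that every $x \in \domain$ lies in exactly one of them. Second, I would identify the positive-conjunction count $\hat{n}^{11}_{(i,k),(j,k')}$ with $|D \cap \domain_{k,k'}|$ and note that it depends only on $D \cap \domain_{k,k'}$; releasing all two-way marginals of $A_i$ and $A_j$ is then the same as releasing the joint histogram of $(A_i, A_j)$ over the partition $\{\domain_{k,k'}\}$. Third, since by hypothesis each per-pair computation is $(\epsilon''_{i,j}, 0)$-differentially private and each reads a single disjoint cell, Theorem~\ref{the:par-comp} gives that the combined release over all $|A_i|\,|A_j|$ cells is still $(\epsilon''_{i,j}, 0)$-differentially private, avoiding the $|A_i|\,|A_j|$-fold budget blow-up that naive sequential composition would charge. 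Finally, the downstream quantities $\tilde{\mathbb{E}}\{X_{i,k} X_{j,k'}\}$ and the correlations derived from them are deterministic functions of these private counts, so post-processing (Theorem~\ref{the:post-proc}) preserves the guarantee.

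The hard part will be reconciling this clean partition argument with the fact that Algorithm~\ref{alg:algo_DP_emp_joint_pair_CDF} outputs, for each pair, the full $2 \times 2$ contingency table $\{\hat{n}^{ab}\}_{a,b}$, whose off-diagonal entries $\hat{n}^{00}, \hat{n}^{01}, \hat{n}^{10}$ are not confined to a single cell $\domain_{k,k'}$, so parallel composition does not apply to them verbatim. The way I would close this gap is to observe that these off-diagonal counts are fixed linear combinations of the joint-histogram cells together with the publicly known $n$ and the one-way counts already released under Algorithm~\ref{alg:algo_DP_emp_CDF}; hence the only genuinely new private object across all pairs is the joint histogram $(\hat{n}^{11}_{(i,k),(j,k')})_{k,k'}$, to which the partition argument applies exactly. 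A secondary point to track, exactly as in Lemma~\ref{lem:perturb_CDF}, is the neighboring relation: under the replace-one-row model a single change moves one row from one cell into another, touching two cells at once, so I would verify that the per-cell Laplace scale $2/\epsilon''_{i,j}$ (matching $L_1$ sensitivity $2$) is precisely what keeps the composed bound at $\epsilon''_{i,j}$ rather than doubling it.
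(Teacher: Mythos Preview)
Your proposal is correct and follows essentially the same approach the paper intends: the paper omits the proof of Lemma~\ref{lem:perturb_2CDF} outright, saying only that it is ``similar to that of Lemma~\ref{lem:perturb_CDF},'' whose proof (Appendix~\ref{app:lem:perturb_CDF}) is precisely the partition-plus-parallel-composition argument you lift to the joint domain $\{\domain_{k,k'}\}$. Your additional care about the off-diagonal entries of the $2\times 2$ table and the replace-one-row sensitivity of $2$ goes beyond what the paper spells out, but it is consistent with the paper's reasoning and does not change the route.
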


The differentially private correlation coefficients $\tilde{r}_{i,j}$ thus obtained can be readily used to construct the differentially private correlation matrix
\begin{equation}
\tilde{\mathbf{R}}=\left[\begin{array}{cccc}
\tilde{r}_{1,1} & \tilde{r}_{1,2} & \cdots & \tilde{r}_{1,d}\\
\tilde{r}_{2,1} & \tilde{r}_{2,2} & \cdots & \tilde{r}_{2,d}\\
\vdots & \vdots & \ddots & \vdots\\
\tilde{r}_{d,1} & \tilde{r}_{d,2} & \cdots & \tilde{r}_{d,d}
\end{array}\right].\label{eq:DP_emp_corr_matrix}
\end{equation}

Notice that the above algorithm computes the correlations in time $O(d^2 n)$. This can be prohibitive if $d$ is large, i.e., if each of the $m$ (original) attributes have large domains. An alternative algorithm to compute the two-way positive conjunctions that takes time only $O(m^2 n)$ is shown in Appendix~\ref{app:faster-two-way}.

\subsection{Copula Construction}
\label{sub:cop-construct}
For this section, we do not need access to the database any more. Hence,
any processing done preserves the previous privacy budget due to closure under post-processing (Theorem~\ref{the:post-proc}).

\subsubsection{Obtaining Gaussian Correlation from the Correlation Matrix}
\label{subsec:algo_orig2Gauss_mapping}
Our aim is to sample standard normal Gaussian variables $Y_{i}$'s corresponding to the attributes $\tilde{X}_{i}$'s\footnote{i.e., the synthetic versions of the $X_i$'s.} where the
correlations among $Y_{i}$'s, given by the correlation matrix $\mathbf{P}$, are \emph{mapped} to the correlations among the $X_{i}$'s, given by the (already computed) correlation matrix $\tilde{\mathbf{R}}$.
A sample from the Gaussian variable $Y_{i}$ is transformed
backed to $\tilde{X}_i$ as
\begin{equation}
\tilde{X}_{i}=\tilde{F}_{i}^{-1}\left(\Phi\left(Y_{i}\right)\right).\label{eq:RV_transformation}
\end{equation}
Obviously,
if the attributes $\tilde{X}_{i}$ are independent,
then $Y_{i}$ are also independent.
However,
in practice $\tilde{X}_{i}$'s are correlated,
which is characterized by $\tilde{\mathbf{R}}$ in our case.
Hence,
the question becomes: \emph{How to choose a correlation matrix $\mathbf{P}$ for $Y_{i}$'s,
so that $\tilde{X}_{i}$'s have the target correlation relationship defined by $\tilde{\mathbf{R}}$? }

From Eq.~\ref{eq:emp-cor} and its perturbation through $\tilde{r}_{i,j}$,
we can obtain
\begin{equation}
\mathbb{E}\left\{ \tilde{X}_{i}\tilde{X}_{j}\right\} =\tilde{r}_{i,j}\sqrt{\textrm{var}\left(\tilde{X}_{i}\right)\textrm{var}\left(\tilde{X}_{j}\right)}+\tilde{\mu}_{i}\tilde{\mu}_{j},\label{eq:cov_eq1}
\end{equation}
On the other hand,
from Eq.~\ref{eq:RV_transformation},
we can get
\begin{align}
\mathbb{E}\left\{ \tilde{X}_{i}\tilde{X}_{j}\right\}  & =  \mathbb{E}\left\{ \tilde{F}_{i}^{-1}\left(\Phi\left(Y_{i}\right)\right)\tilde{F}_{j}^{-1}\left(\Phi\left(Y_{j}\right)\right)\right\} \nonumber \\
 & =  \int_{-\infty}^{+\infty}\int_{-\infty}^{+\infty}\tilde{F}_{i}^{-1}\left(\Phi\left(y_{i}\right)\right)\tilde{F}_{j}^{-1}\left(\Phi\left(y_{j}\right)\right) \mathbf{\Phi}_{i,j}\left(y_{i},y_{j}\right)dy_{i}dy_{j},\label{eq:cov_eq2}
\end{align}
where $\mathbf{\Phi}_{i,j}\left(y_{i},y_{j}\right)$ denotes the standard bivariate probability density function (PDF) of the correlated standard normal random variables $Y_{i}$ and $Y_{j}$, given by
\begin{equation}
\mathbf{\Phi}_{i,j}\left(y_{i},y_{j}\right)=\frac{1}{2\pi\sqrt{1-\rho_{i,j}^{2}}}\exp\left(-\frac{y_{i}^{2}+y_{j}^{2}-2\rho_{i,j}y_{i}y_{j}}{2\left(1-\rho_{i,j}^{2}\right)}\right).\label{eq:joint_PDF_Gauss_simp}
\end{equation}
Here $\rho_{i, j}$ is the Pearson correlation coefficient, which for the standard normal variables $Y_i$ and $Y_j$ is given by $\mathbb{E}\left\{ Y_i Y_j \right\}$. Our task is to find the value of $\rho_{i, j}$ such that Eq.~\ref{eq:cov_eq1} and Eq.~\ref{eq:cov_eq2} are equal. In other words, Eqs.~\ref{eq:cov_eq1} and~\ref{eq:cov_eq2} define the relationship between $\tilde{r}_{i,j}$ and $\rho_{i, j}$. Notice that by construction, in general, we do \emph{not} have $\rho_{i,j}=\tilde{r}_{i,j}$. We can obtain $\rho_{i,j}$ by means of a standard bisection search (e.g., see~\cite[p. 148]{norta-book}). The two-fold integral in Eq.~\ref{eq:cov_eq2} with respect to $y_i$ and $y_j$, 
is evaluated numerically in the bisection search. In more detail, 
$dy_i$ and $dy_j$ are set to a small value of 0.01, 
and the lower and upper limits of the such integral are set to -10 and 10, respectively. 
Such lower and upper limits make the numerical results sufficiently accurate since $\mathbf{\Phi}_{i,j}\left(y_{i},y_{j}\right)$ is the standard bivariate PDF of two correlated normal random variables (and hence have negligibly small probability mass beyond the limits of integration). With each such $\rho_{i, j}$, we can construct the matrix $\mathbf{P}$ corresponding to $\tilde{\mathbf{R}}$.

\paragraph*{{Remark.}} Notice that the choice of $\rho_{i, j}$ ensures that the resulting sampled Gaussian variables have the property that when transformed back to $\tilde{X}_{i}$ and $\tilde{X}_{j}$, we get $\mathbb{E}\left\{ \tilde{X}_{i}\tilde{X}_{j}\right\} \approx \mathbb{E}\left\{{X}_{i}{X}_{j}\right\}$, where the latter is the input expectation. For binary attributes, recall that $\mathbb{E}\left\{ {X}_{i}{X}_{j}\right\} = \frac{1}{n} \sum_{k = 1}^{n} I \left\{ (X_i^{(k)} = 1) \cap (X_j^{(k)} = 1) \right\}$. Thus the method ensures that the ``11'''s are well approximated to the input distribution. However, the correlation coefficient, and the distribution of 01's, 10's and 00's might not be the same as the original dataset. This is evident from Eq.~\ref{eq:cov_eq1}. However, since the remaining quantities in Eq.~\ref{eq:cov_eq1} depend on the one-way margins, maintaining a good approximation to margins would imply that the distribution of 01's, 00's and 10's would also be well approximated. However, error in one or both of the margins would propagate to the error in the 01's, 10's and 00's. It is due to this reason that we target positive two-way conjunctions for utility.

\subsubsection{Generating Records Following a Multivariate Normal Distribution}
\label{subsec:algo_gen_multivariate_norm_distri}
Since each $Y_{i}$ follows the standard normal distribution and the correlation among $Y_{i}$'s is characterized by $\mathbf{P}$,
we can generate records by sampling from the resulting multivariate normal distribution. However, the matrix $\mathbf{P}$ obtained through this process should have two important properties for this method to work:
\begin{enumerate}
  \item $\mathbf{P}$ should be a correlation matrix, i.e., $\mathbf{P}$ should be a symmetric positive semidefinite matrix with unit diagonals~\cite{higham2002computing}. 
  \item $\mathbf{P}$ should be positive definite to have a unique Cholesky decomposition defined as $\mathbf{P} = \mathbf{L}^T \mathbf{L}$ where $\mathbf{L}$ is a lower triangular matrix with positive diagonal entries~\cite[p. 187]{golub}. 
\end{enumerate}
To ensure Property 1, we use the algorithm from~\cite[\S 3.2]{higham2002computing}, denoted NCM, to obtain the matrix $\mathbf{P}'$ as
\begin{equation}
\mathbf{P}' =\text{NCM}\left(\mathbf{P} \right),\label{eq:Q2}
\end{equation}
Furthermore, to satisfy Property 2, i.e., positive definiteness, we force the zero eigenvalues of $\mathbf{P}'$ to small positive values. For completeness, we describe the simplified skeleton of the algorithm from~\cite{higham2002computing} in Algorithm~\ref{alg:algo_NCM}. The algorithm searches for a correlation matrix that is closest to $\mathbf{P}$ in a weighted Frobenius norm. The output is asymptotically guaranteed to output the nearest correlation matrix to the input matrix~\cite{higham2002computing}. For details see~\cite[\S 3.2, p. 11]{higham2002computing}. The overall complexity of the procedure is $O(d^\omega)$, where $\omega < 2.38$ is the coefficient in the cost of multiplying two $d \times d$ matrices~\cite{omega}. Note that matrix multiplication is highly optimized in modern day computing languages.

\begin{algorithm*}[t]
\caption{\label{alg:algo_NCM} Algorithm to Obtain $\textrm{NCM}\left(\mathbf{P}\right)$}

\begin{enumerate}
\item Initialization: $\bigtriangleup\mathbf{S}_{0} \leftarrow \mathbf{0}$,$\mathbf{Y}_{0} \leftarrow \mathbf{P}$,
$k \leftarrow 1$.
\item While $k < \mathsf{iters}$, where $\mathsf{iters}$ is the maximum of iterations
\begin{enumerate}
\item $\mathbf{R}_{k}=\mathbf{Y}_{k-1}-\bigtriangleup\mathbf{S}_{k-1}$
\item Projection of $\mathbf{R}_{k}$ to a semi-positive definite matrix: 
$\mathbf{X}_{k} \leftarrow \mathbf{V}^{\textrm{T}}\textrm{diag}\left(\max\left\{ \mathbf{\Lambda},\mathbf{0}\right\} \right)\mathbf{V}$, where $\mathbf{V}$ and $\mathbf{\Lambda}$ contain the eigenvectors and the eigenvalues of $\mathbf{R}_{k}$, respectively, and $\textrm{diag}\left(\cdot\right)$ transforms a vector into a diagonal matrix.
\item $\bigtriangleup\mathbf{S}_{k} \leftarrow \mathbf{X}_{k}-\mathbf{R}_{k}$
\item Projection of $\mathbf{X}_{k}$ to a unit-diagonal matrix: $\mathbf{Y}_{k} \leftarrow \text{unitdiag} \left(\mathbf{X}_{k}\right)$, where $\text{unitdiag}(\cdot)$ fixes the diagonal of the input matrix to ones.
\item $k \leftarrow k + 1$.
\end{enumerate}
\item Output: $\mathbf{P}' \leftarrow \mathbf{Y}_{k}$.
\end{enumerate}
\end{algorithm*}

Finally, we generate records from a multivariate normal distribution using the well known method described in Algorithm~\ref{alg:algo_gen_multivariate_norm_distri}. The rationale of invoking Cholesky decomposition is to ensure that
\begin{equation*}
\mathbb{E}\left\{ \mathbf{Y}^{T}\mathbf{Y}\right\} =\mathbb{E}\left\{ \mathbf{L}^{T}\mathbf{Z}^{T} \mathbf{Z} \mathbf{L}\right\} = \mathbf{L}^{T}\mathbb{E}\left\{\mathbf{Z}^{T}\mathbf{Z}\right\} \mathbf{L}=\mathbf{L}^{T}\mathbf{L}=\mathbf{P}',
\end{equation*}
where we have used the fact that $\mathbb{E}\left\{ \mathbf{Z}^{T}\mathbf{Z}\right\} = \mathbf{I}$ because each record in
$\mathbf{Z}$ follows i.i.d. multivariate normal distribution. The output dataset $D'_{\mathsf{B}}$ is the final synthetic dataset.

\begin{algorithm}
\caption{\label{alg:algo_gen_multivariate_norm_distri} Generating Records Following a Multivariate Normal Distribution}

\begin{enumerate}
\item Generate $n$ records,
with the attributes in each record following i.i.d. standard normal distribution, i.e.,
\begin{equation*}
\mathbf{Z}=
\begin{bmatrix}
Z^{(1)}  &  Z^{(2)} & \cdots & Z^{(n)} 
\end{bmatrix}^T
\end{equation*}
where $Z^{(k)}$ is given by $\begin{bmatrix}
Z_{1}^{(k)} & Z_{2}^{(k)} & \cdots & Z_{d}^{(k)}
\end{bmatrix}^T$ and $Z_{i}^{(k)}$'s are i.i.d. standard normal random variables.
\item Compute $\mathbf{Y}$ as $\mathbf{Y} = \mathbf{Z} \mathbf{L}$ where $\mathbf{L}^{T}\mathbf{L} = \mathbf{P}'$ and $\mathbf{L}$ is obtained by the Cholesky decomposition as
$\mathbf{L}=\textrm{chol}\left(\mathbf{P}'\right)$.
\item From the matrix $\mathbf{Y} = \begin{bmatrix} Y^{(1)} & Y^{(2)} & \cdots & Y^{(n)} \end{bmatrix}^T$, 
%
%
map every element $Y_{i}^{(k)}$ in each record $Y^{(k)}$ to $\tilde{X}_{i}^{(k)}$ using Eq.~\ref{eq:RV_transformation}, where $i \in \left\{ 1,2,\ldots, d \right\}$.
\item Output the mapped data as $D'_{\mathsf{B}}$.
\end{enumerate}
\end{algorithm}

\subsection{Privacy of the Scheme}
Let $m' = m + \binom{m}{2}$,
where $m$ is the number of attributes in $D$.
Let $\epsilon < 1$,
say $\epsilon = 0.99$.
Fix a $\delta$.
We set each of the $\epsilon'_i$'s and $\epsilon''_{i, j}$'s for $i, j \in [m]$ to $\epsilon'$,
where $\epsilon'$ is such that it gives us the target $\epsilon$ through Theorem~\ref{the:comp} by setting $k = m'$.
According to the advanced composition theorem (Theorem~\ref{the:comp}), since each of our $m'$ mechanisms are $(\epsilon', 0)$ differentially private, the overall construction is $(\epsilon, \delta)$-differentially private. Privacy budget consumed over each of the $m'$ mechanisms is roughly $\frac{1}{\sqrt{m'}}$. Note that all the algorithms in Section~\ref{sub:cop-construct} do not require access to the original dataset and therefore privacy is not impacted due to the post-processing property (see Theorem~\ref{the:post-proc}).



\section{Experimental Evaluation and Utility Analysis}
\label{sec:exp}

\subsection{Query Class and Error Metrics}

As mentioned in Section~\ref{sec:intro} and explained in Section~\ref{subsec:algo_orig2Gauss_mapping}, our focus is on all one-way
marginal and two-way positive conjunction queries. In addition, we will also evaluate the performance of our method on the set of three-way positive conjunction queries to demonstrate that our method can also give well approximated answers to other types of queries. Thus, we use the following query class to
evaluate our method, $Q := Q_1 \cup Q_2 \cup Q_3$. Here, $Q_1$ is the set of one-way
marginal counting queries, which consists of queries $q$ specified as

\[
q(i, D_{\mathsf{B}}) = \sum_{k = 1}^n I \{ X_i^{(k)} = b \}
\]
where $i \in [d]$ and $b \in \{0, 1\}$. The class $Q_2$ is the set of positive two-way conjunctions and consists of queries $q$ specified as
\[
q(i, j, D_{\mathsf{B}}) = \sum_{k = 1}^n I \{ (X_i^{(k)} = 1) \cap  (X_j^{(k)} = 1) \}
\]
where $i, j \in [d], j \ne i$. We define $Q_{12} = Q_1 \cup Q_2$. The class $Q_3$ is the set of positive three-way conjunctions, and is defined analogously. Note that only those queries are included in $Q_2$ and $Q_3$ whose corresponding binary columns are from \emph{distinct} original columns in $D$. Answers to queries which evaluate at least two binary columns from the same original column in $D$ can be trivially fixed to zero; as these are ``structural zeroes.'' We assume this to be true for queries in the two aforementioned query sets from here onwards.
Our error metric of choice is the absolute error, which for a query $q \in Q$ is defined as
\[
{| q(D_{\mathsf{B}}) - q(D_{\mathsf{B}}') |}
\]
We have preferred this error metric over relative error (which scales the answers based on the original answer), since it makes it easier to compare results across different datasets and query types. For instance, in the case of relative error, a scaling factor is normally introduced, which is either a constant or a percentage of the number of rows in the dataset~\cite{wavelet, dp-copula}. The scaling factor is employed to not penalize queries with extremely small answers. However, the instantiation of the scaling factor is mostly a heuristic choice. 

For each of the datasets (described next), we shall evaluate the differences in answers to queries from $Q$ in
the remainder of this section. We shall be reporting $(\alpha, \beta)$-utility in the following way. We first sort the query answers in ascending order of error. We then fix a value of $\beta$, and report the maximum error and average error from the first $1 - \beta$ fraction of queries. We shall use values of $\beta = 0.05, 0.01$ and $0$. The errors returned then correspond to 95\%, 99\% and 100\% (overall error) of the queries, respectively. 

\subsection{Datasets and Parameters}
\label{sub:data-and-params}
We used three real-world datasets to evaluate our method. All three datasets contained a
mixture of ordinal (both discrete and continuous) and categorical attributes.

\begin{enumerate}
 \item \emph{Adult Dataset:} This is a publicly available dataset which is an extract from the 1994 US census information~\cite{adult}. There are 14 attributes in total which after pre-processing result in 194 binary attributes. There a total of 32,560 rows in this dataset (each belonging to an individual).
 \item \emph{{\dss} Dataset:} This dataset is a subset of a dataset obtained from the Department of Social Services (DSS), a department of the Government of Australia. The dataset contains transactions of social security payments. The subset contains 27 attributes, which result in 674 binary attributes. There are 5,240,260 rows in this dataset.
 \item \emph{Hospital Dataset:} This dataset is a subset of the hospital ratings dataset extracted from a national patient survey in the US.\footnote{See \url{https://data.medicare.gov/Hospital-Compare/Patient-survey-HCAHPS-Hospital/dgck-syfz}.} The dataset is among many other datasets gathered by the Centers for Medicare \& Medicaid Services (CMS), a federal agency in the US. We extracted (the first) 9 attributes and 10,000 rows for this dataset (resulting in 1,201 binary attributes). We mainly chose this dataset as an example of a dataset with highly correlated attributes.
\end{enumerate}


%
\paragraph*{{Parameter Values.}} We set $\delta = 2^{-30}$, following a similar value used in~\cite[\S 3, p. 5]{sec-sample}. This is well below ${n^{-1}}$ for the three datasets, where $n$ is the number of rows. We set the same privacy budget, $\epsilon'$, to compute each of the $m$ one-way marginals and $\binom{m}{2}$ two-way positive conjunctions. For the Adult, {\dss} and Hospital datasets we have $m = 14$, $m = 27$ and $m = 9$, respectively. We search for an $\epsilon'$ for the two datasets by setting $\delta = 2^{-30}$ and $k = m + \binom{m}{2}$ in Theorem~\ref{the:comp} which gives an overall $\epsilon$ of just under $1$. The resulting computation gives us $\epsilon' = 0.014782$ for Adult, $\epsilon' = 0.007791$ for {\dss}, and $\epsilon' = 0.022579$ for the Hospital dataset.

\subsection{Experimental Analysis}

To evaluate our method, we generate multiple synthetic datasets from each of
the three datasets. We will first evaluate the synthetic dataset generated
through the Gaussian copula (with no differential privacy) for each of the three datasets. This will be
followed by the evaluation of the differentially private version of this
synthetic dataset against the baseline which is
the application of the Laplace mechanism on the set of queries $Q$.
Note that this does not result in a synthetic dataset. For readability, we use
abbreviations for the different outputs. These are shown in Table~\ref{tab:not-mean}.


\begin{table*}
\centering
\begin{tabular}{r|l|c|c}
\multirow{ 2}{*}{Output} & \multirow{ 2}{*}{Mechanism} & Differential & \multirow{ 2}{*}{Synthetic} \tabularnewline
&&Privacy& \tabularnewline
 \hline\hline
cop & Gaussian copula with original correlation matrix & \xmark & \cmark \tabularnewline
cop-ID & Gaussian copula with identity correlation matrix & \xmark & \cmark \tabularnewline
cop-1 & Gaussian copula with correlation matrix of all 1's & \xmark & \cmark \tabularnewline
no-cor & Answers generated assuming no correlation  & \xmark & \xmark \tabularnewline
Lap & Laplace mechanism  & \cmark & \xmark \tabularnewline
dpc & Gaussian copula with diff. private correlation matrix & \cmark & \cmark \tabularnewline
\end{tabular}
\caption{Notation used for outputs from different mechanisms. Note that not all of them are synthetic datasets or differentially private.}
\label{tab:not-mean}
\end{table*}

\subsubsection{Error due to Gaussian Copula without Differential Privacy}
\label{subsub:cop-no-dp}
We first isolate and quantify query errors from a synthetic dataset obtained directly through the Gaussian copula, i.e., without any differentially private noise added to the one-way marginals and the correlation matrix (see Figure~\ref{fig:flowchart}). Since generating synthetic datasets through the copula is an inherently random process, this itself may be a source of error. We denote such a dataset by ``cop.'' Thus, Adult cop, DSS cop and Hospital cop are the ``cop'' versions of the corresponding datasets. We restrict ourselves to the query set $Q_{12}$ and compare error from cop against three other outputs:
\begin{itemize}
	\item[cop-ID: ] A synthetic dataset obtained by replacing the correlation matrix $\mathbf{P}'$ with the identity matrix. Evaluating against this dataset will show whether cop performs better than a trivial mechanism which assumes all binary attributes to be uncorrelated. Note that this mainly effects answers to $Q_2$, and not the one-way marginals $Q_1$.
	\item[cop-1: ] Another synthetic dataset obtained by replacing the correlation matrix $\mathbf{P}'$ with the matrix $\mathbf{1}$ of all ones. This serves as other extreme where all attributes are assumed to be positively correlated. Once again, this is to compare answers from $Q_2$.
	\item[no-cor: ] For the query class $Q_2$, we obtain a set of random answers which are computed by simply multiplying the means $\mu_1$ and $\mu_2$ of two binary attributes. This is the same as simulating two-way positive conjunctions of uncorrelated attributes. This should have an error distribution similar to the answers on $Q_2$ obtained from cop-ID. 
\end{itemize}

\paragraph*{Results.}
Figure~\ref{fig:cop:all} shows the CDF of the absolute error on the query set $Q_{12}$ from different outputs from the three datasets. Looking first at the results on $Q_1$ (top row in the figure), we see that for all three datasets cop has low absolute error, yielding $(\alpha, \beta)$-utility of $(149, 0.01)$ for Adult, $(170, 0.01)$ for DSS and $(28, 0.01)$ for the Hospital dataset in terms of max-absolute error. The datasets cop-ID and cop-1 exhibit similar utility for all three datasets. This is not surprising since $Q_1$ contains one-way marginals, whose accuracy is more impacted by the inverse transforms (Eq.~\ref{eq:RV_transformation}) rather than the correlation matrix. Answers on $Q_2$ are more intriguing (bottom row of Figure~\ref{fig:cop:all}). First note that the utility from cop is once again good for all three input datasets with 99\% of the queries having a maximum absolute error of $353$ for Adult, 83 for DSS and 6 for the Hospital dataset. The cop-1 outputs have poorer utility. However, interestingly, cop-ID and no-cor outputs yield utility very similar to cop. We discuss this in more detail next.

\begin{figure}[!tbh]
\centering
\subfigure[Set $Q_1$ (one-way marginals)]{\label{fig:cop:one-way:adult}
\includegraphics[width=0.3\textwidth]{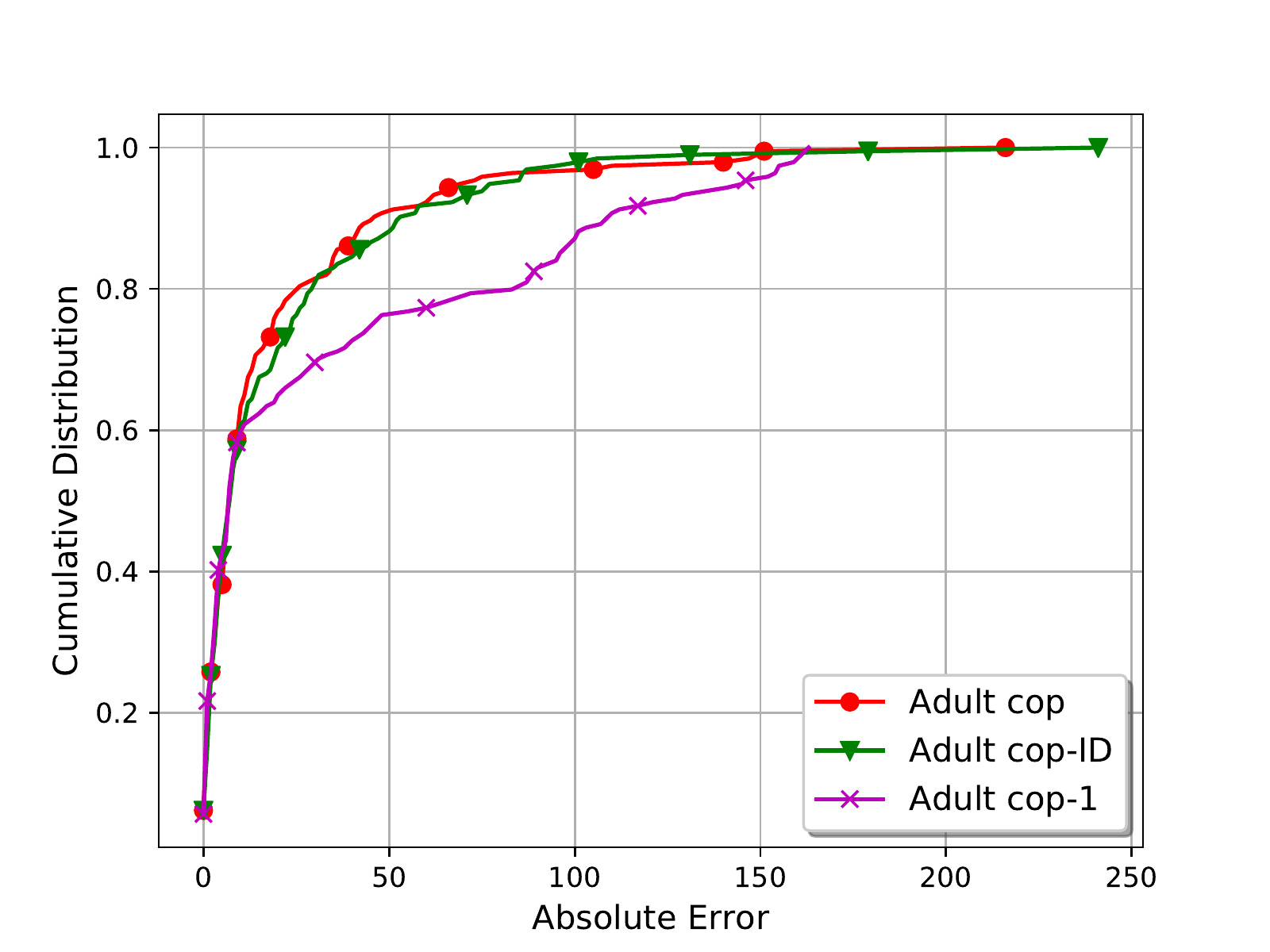}}
~
\subfigure[Set $Q_1$ (one-way marginals)]{\label{fig:cop:one-way:dss}
\includegraphics[width=0.3\textwidth]{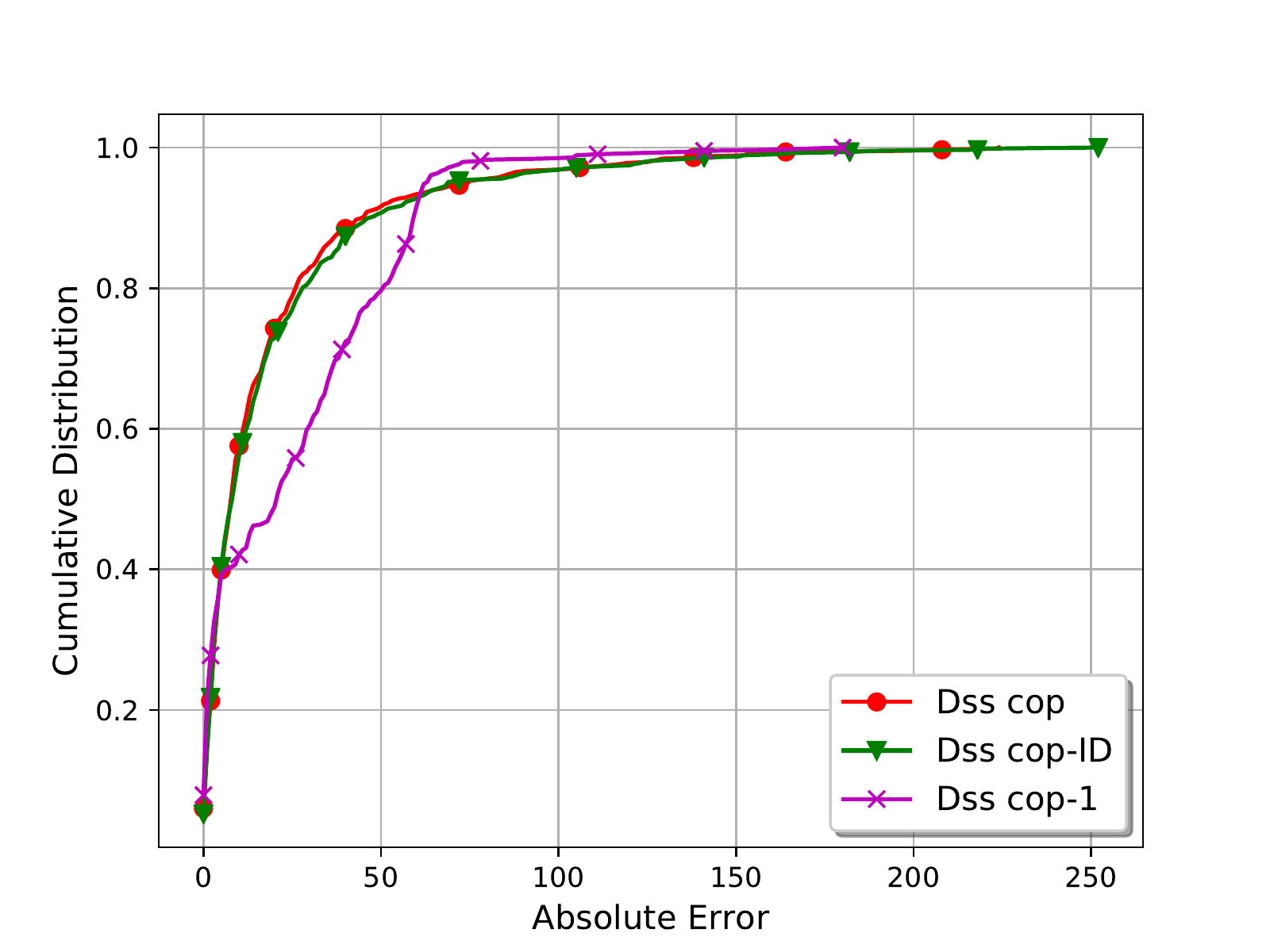}}
~
\subfigure[Set $Q_1$ (one-way marginals)]{\label{fig:cop:one-way:hos}
\includegraphics[width=0.3\textwidth]{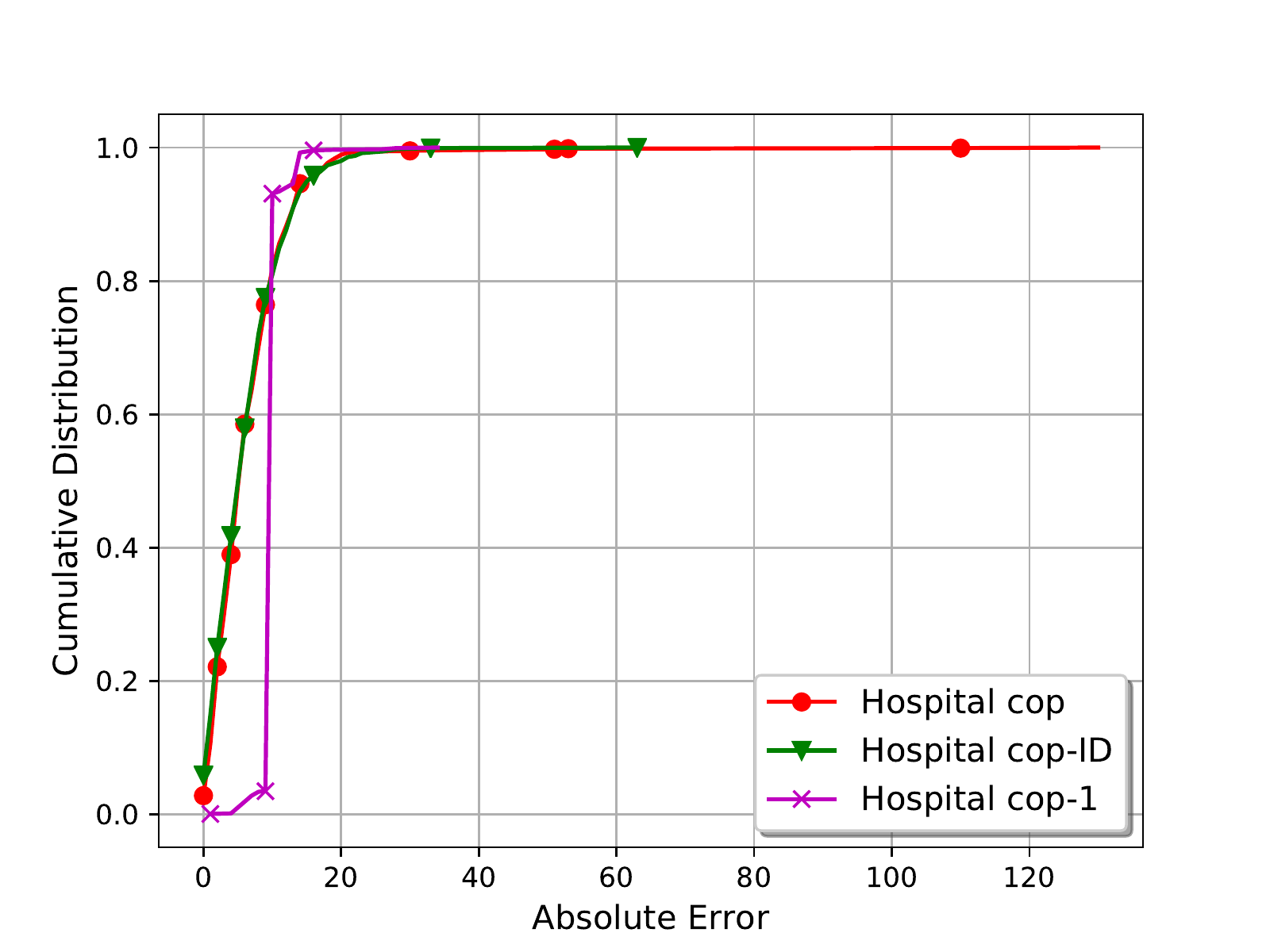}}

\subfigure[Set $Q_2$ (two-way conjunctions)]{\label{fig:cop:two-way:adult}
  \includegraphics[width=0.3\textwidth]{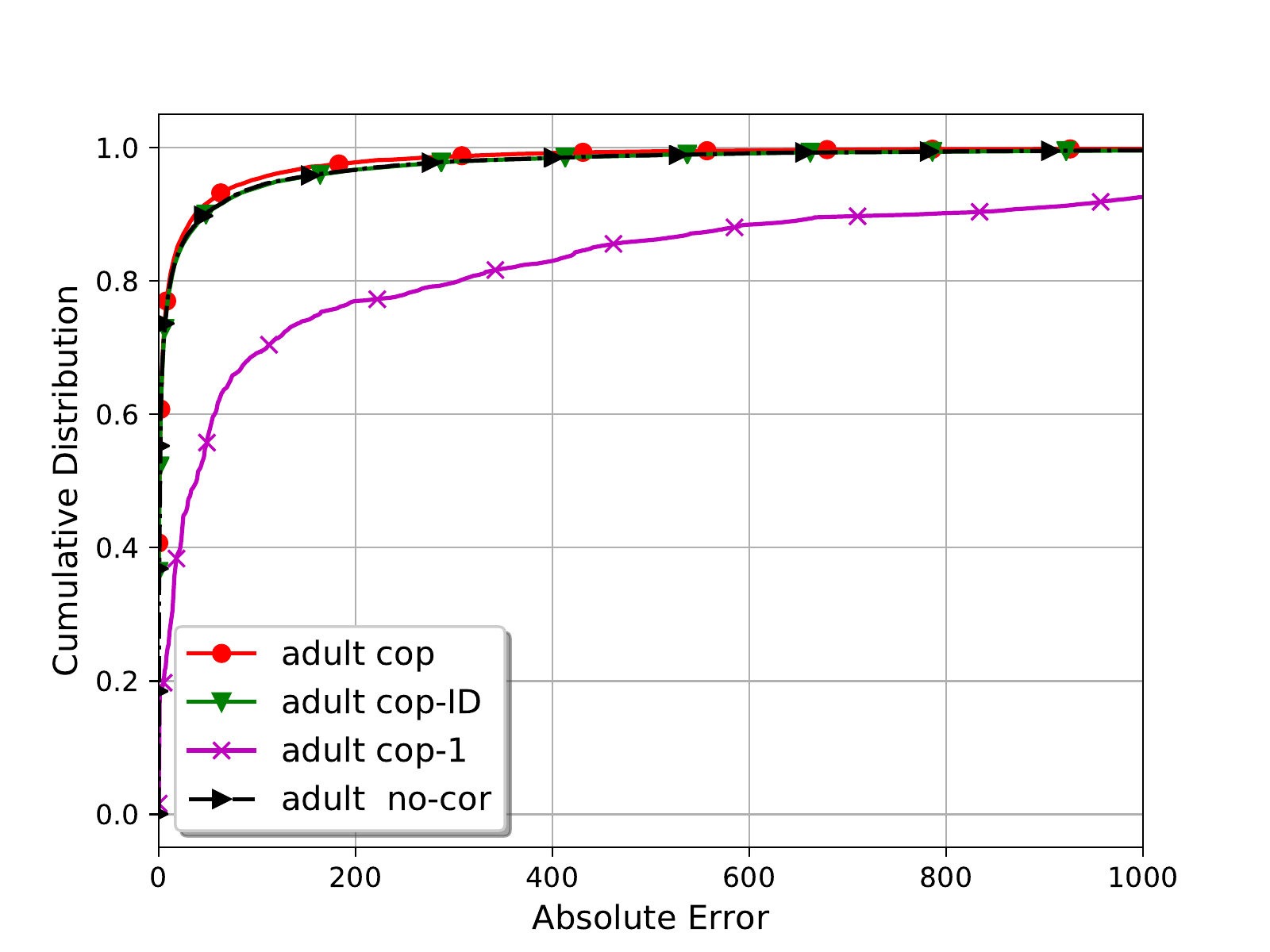}}
~
\subfigure[Set $Q_2$ (two-way conjunctions)]{\label{fig:cop:two-way:dss}
  \includegraphics[width=0.3\textwidth]{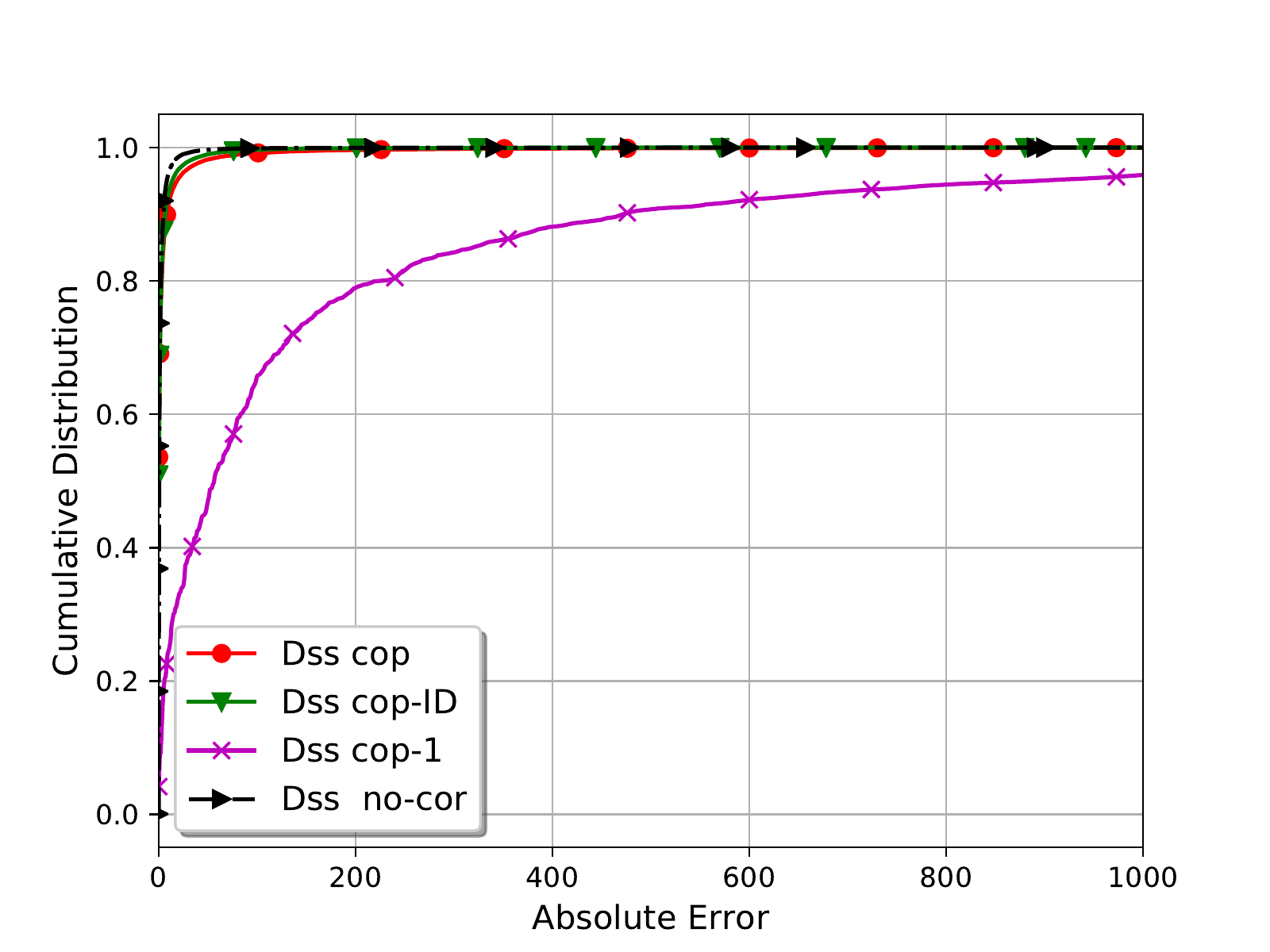}}
~
\subfigure[Set $Q_2$ (two-way conjunctions)]{\label{fig:cop:two-way:hos}
  \includegraphics[width=0.3\textwidth]{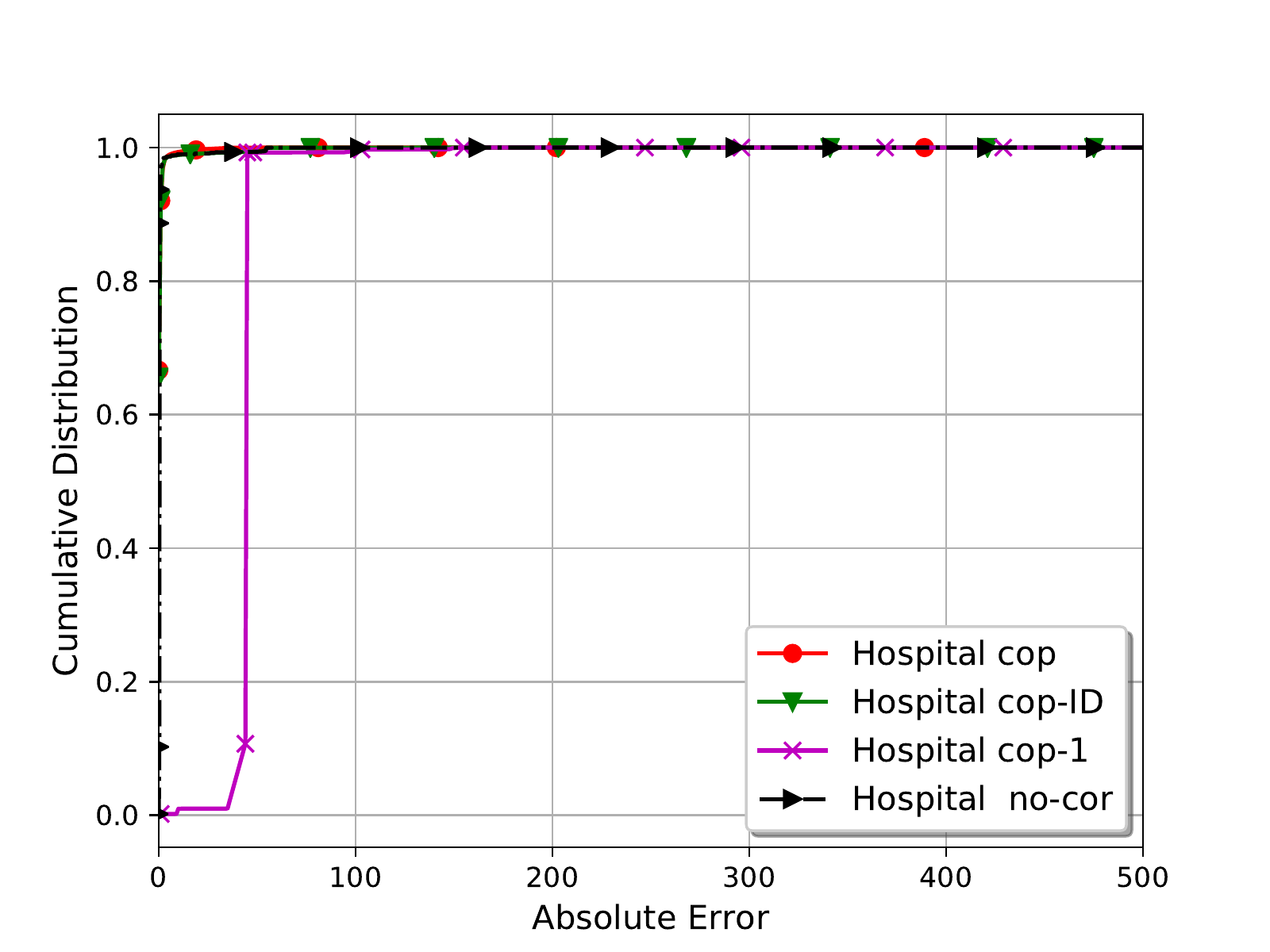}}  
\caption{Relative error over $Q_{12}$ on synthetic versions of the Adult, DSS and Hospital datasets (without differential privacy).}
\label{fig:cop:all}
\end{figure}

%

\paragraph*{{Separating Low and High Correlations.}}
There are two possible reasons why cop-ID and no-cor outputs perform close to cop outputs on the set $Q_2$: (a) our method does
not perform better than random (when it comes to $Q_2$), or (b) uncorrelated attributes dominate the dataset thus overwhelming the distribution of errors. The second reason is also evidenced by the fact that the cop-1 dataset, using a correlation matrix of all ones, performs worse than the other three outputs, indicating that highly correlated attributes are rare in the three datasets.

To further ascertain which of the two reasons is true, we separate the error results on the set $Q_2$ into two parts: a set of (binary) attribute pairs having high correlations (positive or negative), and another with low correlations. If our method performs better than random, we would expect the error from cop to be lower on the first set when compared to cop-ID and no-cor, while at least comparable to the two on the second set. We use the Hospital dataset for this analysis as it was the dataset with the most highly correlated attributes. There are a total of 626,491 pairs of binary attributes (as mentioned before, we ignore binary attribute pairs that correspond to different attribute values on the same attribute in the original dataset). Out of these, only 3,355 have an (absolute) Pearson correlation coefficient $|r| \ge 0.5$. Thus, an overwhelming 99.46\% of pairs have $|r| < 0.5$. This shows that the dataset does indeed contain a high number of low-correlated attributes, which partially explains similar error profile of cop, cop-ID and no-cor.

Figure~\ref{fig:cop:hos:corr} shows this breakdown. The error on the set with $|r| < 0.5$ is very similar for cop, cop-ID and no-cor (Figure~\ref{fig:cop:low-cor:hos}). Looking at Figure~\ref{fig:cop:high-cor:hos}, for the set with $|r| \ge 0.5$, on the other hand, we note that cop outperforms both cop-ID and no-cor. Also note that cop-1 is similar in performance to our method. This is understandable since cop-1 uses a correlation matrix with all ones, and hence is expected to perform well on highly correlated pairs. 
This indicates that our method outperforms cop-ID and no-cor. We conclude that the \emph{apparent} similarity between cop, cop-ID and no-cor on the set $Q_2$ is due to an artefact of some real-world datasets which may have a high number of uncorrelated attributes; when the results are analyzed separately, our method is superior. We remark that we arrived at the same conclusion for the Adult and {\dss} datasets, but omit the results due to repetition.

\begin{figure*}[!tbh]
\centering
\subfigure[$|r| <0.5$]{\label{fig:cop:low-cor:hos}
\includegraphics[width=0.32\textwidth]{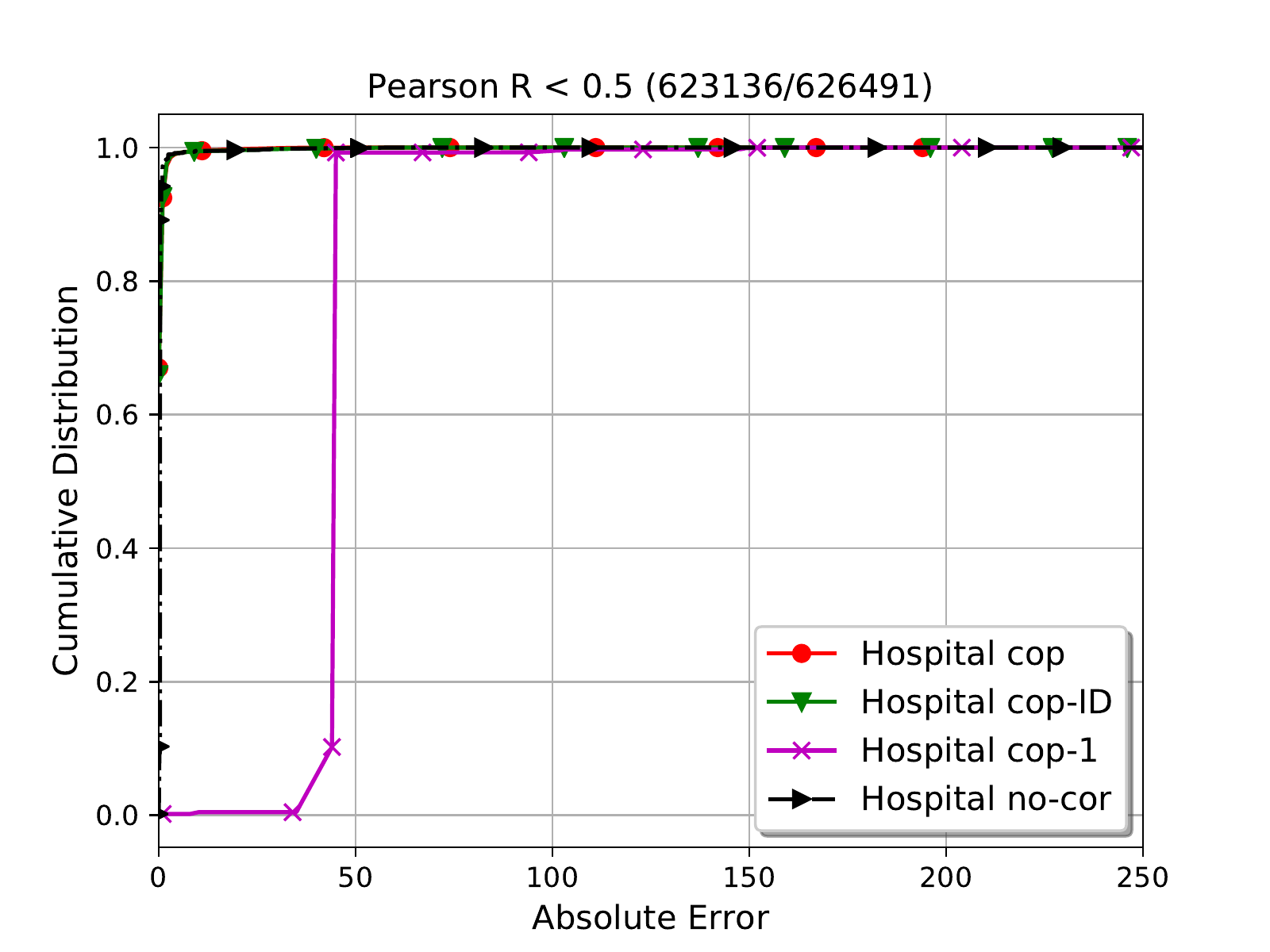}}
\subfigure[$|r| \geq0.5$]{\label{fig:cop:high-cor:hos}
\includegraphics[width=0.32\textwidth]{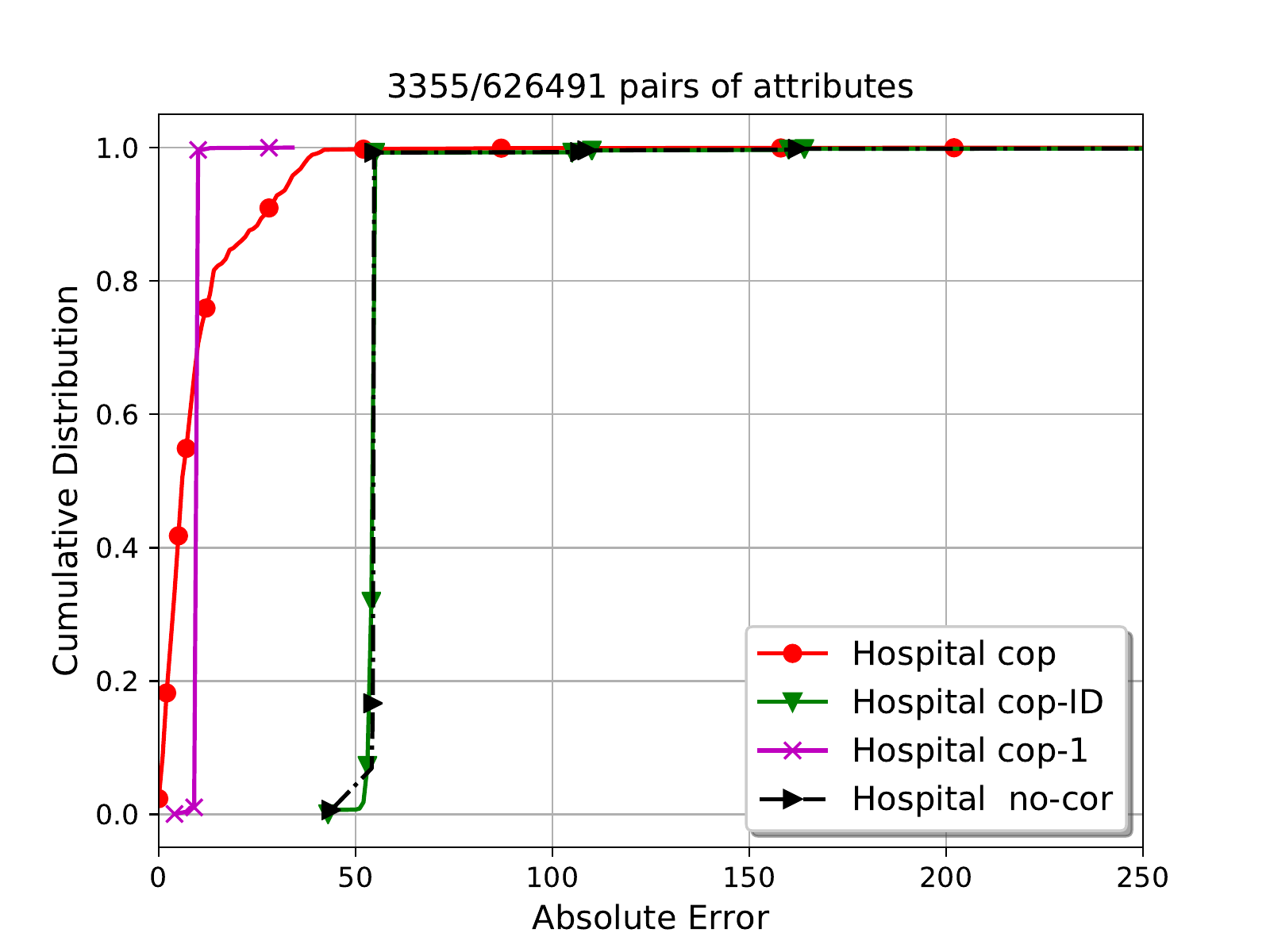}}
\caption{Absolute error over $Q_2$ on synthetic versions of the Hospital dataset (without differential privacy) for different values of the absolute Pearson correlation coefficient $|r|$ between pairs of attributes.}
\label{fig:cop:hos:corr}
\end{figure*}



\subsubsection{Error due to Differentially Private Gaussian Copula}
Having established that the error due to Gaussian copula is small, we now turn to the complete version of our method, i.e., with differential privacy. For this section, we are interested in three outputs: (a) dpc, i.e., the synthetic dataset obtained through our differentially private Gaussian copula method, (b) cop, i.e., the synthetic dataset via our Gaussian copula method without differential privacy, and (c) Lap, i.e., a set of answers obtained by adding independent Laplace noise to the answers to the queries in $Q_{12}$. Note that Lap is not a synthetic dataset. We set the same $\epsilon'$ for Lap, as we did for dpc. 

\paragraph*{{Results.}}
Figure~\ref{fig:dpcop} shows the absolute error CDF on the query set $Q_{12}$ for cop, dpc and Lap versions constructed from the three datasets. For the set $Q_1$ (top row in figure), we see that cop outperforms both dpc and Lap. This is due to the fact that for privacy, a higher amount of noise is required. Crucially, our method does not introduce further error over the Laplace mechanism, as is indicated by the similarity of the curves corresponding to dpc and Lap. Interestingly, the results for $Q_2$ show that dpc outperforms independent Laplace noise (bottom row of Figure~\ref{fig:dpcop}). While the error from dpc is still higher than cop, it is closer to it than the error due to Lap. This indicates that for the majority of the queries, our method applies less noise than Lap. 

However, in some cases, for a small percentage of queries Lap adds less noise than our mechanism. This is clear from Table~\ref{tab:alpha-values}, where we show the maximum and average error\footnote{Rounded to the nearest integer.} from 95\%, 99\% and 100\% percent of the queries from $Q_{12}$ across dpc and Lap versions of the Adult, {\dss} and Hospital datasets. The error profiles of the dpc and Lap variants are similar for the query set $Q_1$. For the set $Q_2$, we can see that Lap only outperforms our method for the Adult and {\dss} datasets if we consider the maximum absolute error across all queries. On the other hand our method outperforms Lap if we consider 95\% and 99\% of queries. Thus, for less than 1\% of queries, the dpc versions of Adult and {\dss} exhibits less utility than Lap, while being similar to the latter for the Hospital dataset.

\begin{figure*}[!tbh]
\centering
\subfigure[Set $Q_1$ (Adult)]{\label{fig:dpcop:one-way:adult}
\includegraphics[width=0.31\textwidth]{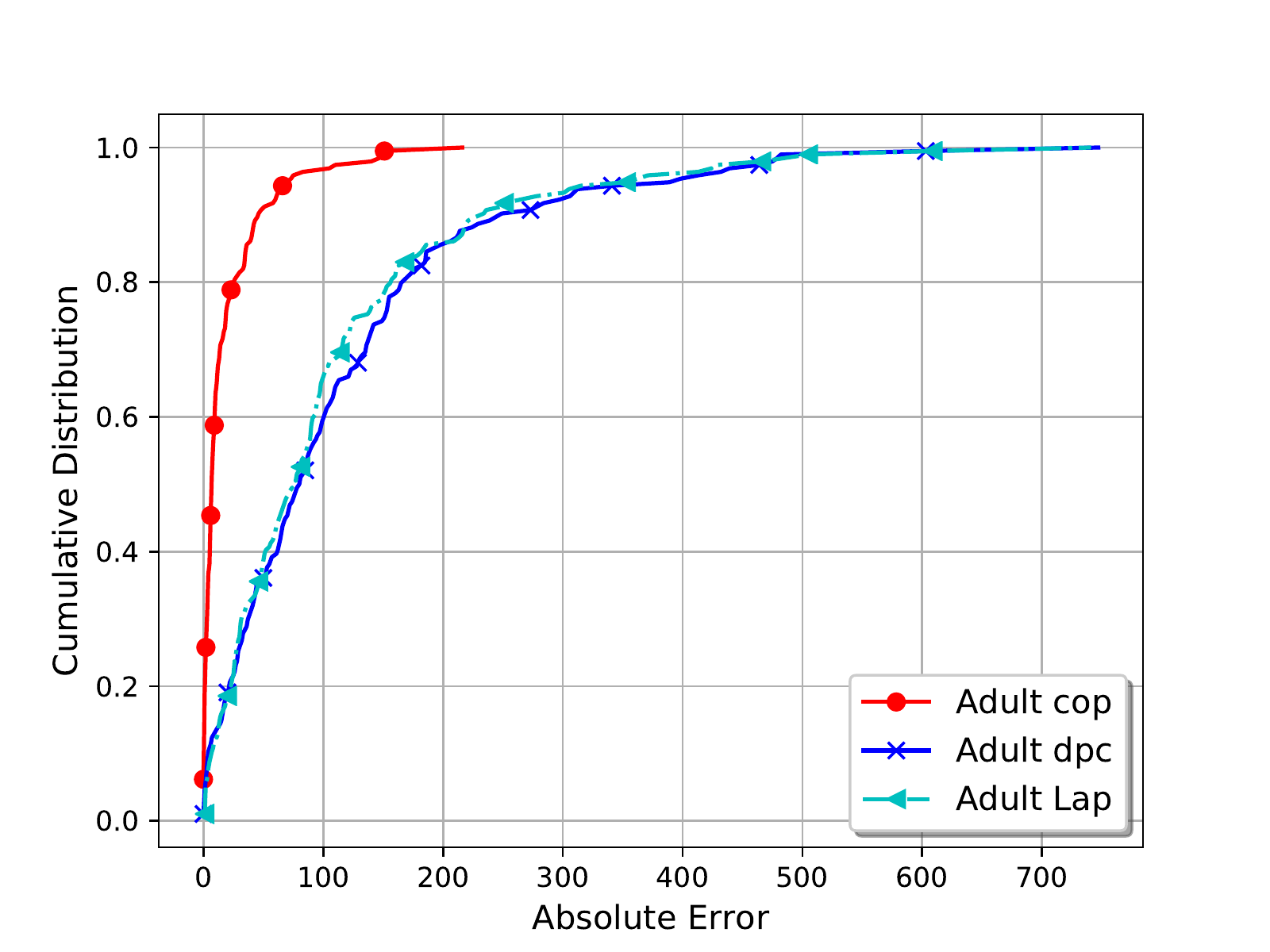}}
\subfigure[Set $Q_1$ ({\dss})]{\label{fig:dpcop:one-way:dss}
\includegraphics[width=0.31\textwidth]{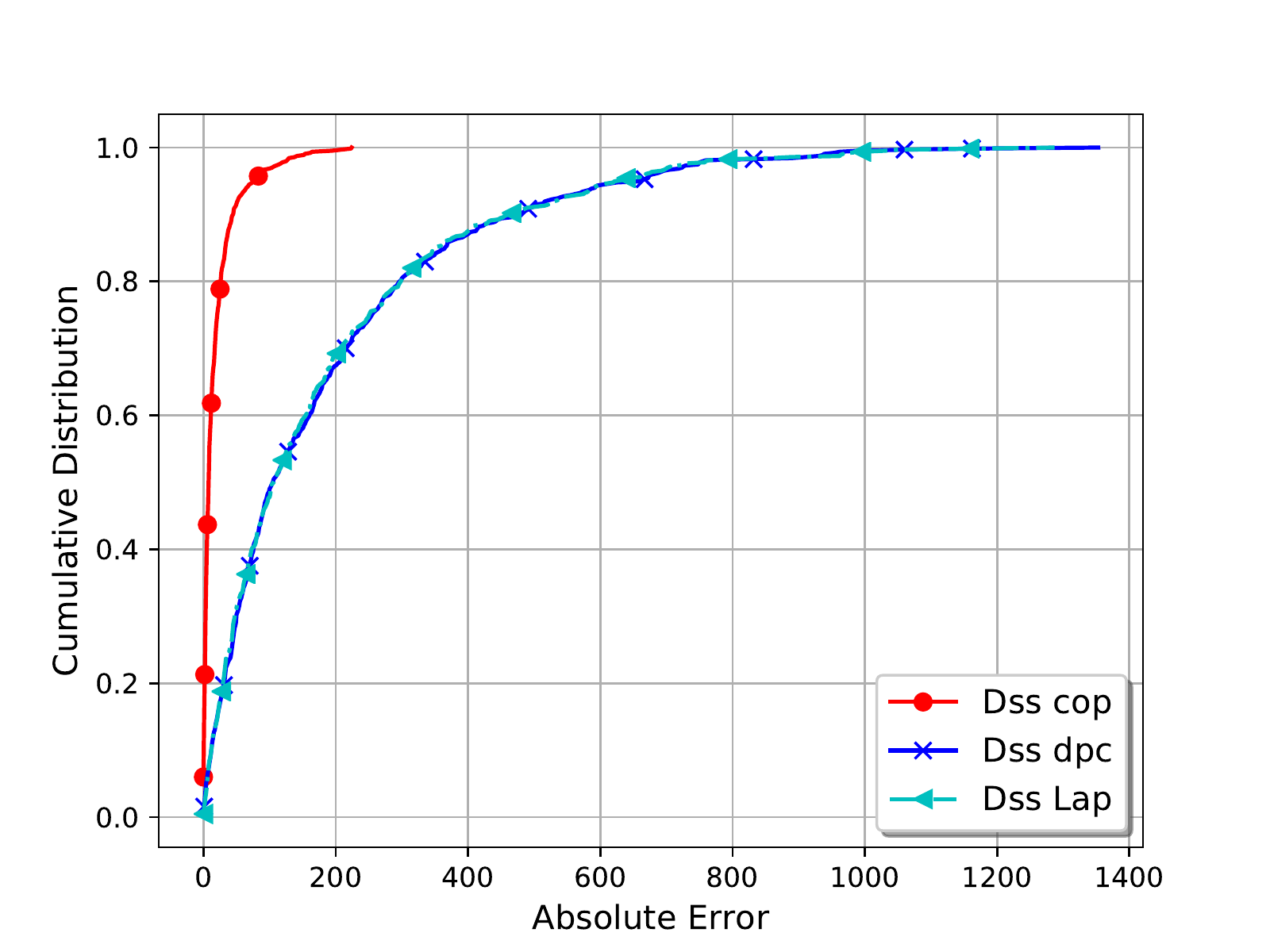}}
\subfigure[Set $Q_1$ (Hopspital)]{\label{fig:dpcop:one-way:hos}
\includegraphics[width=0.31\textwidth]{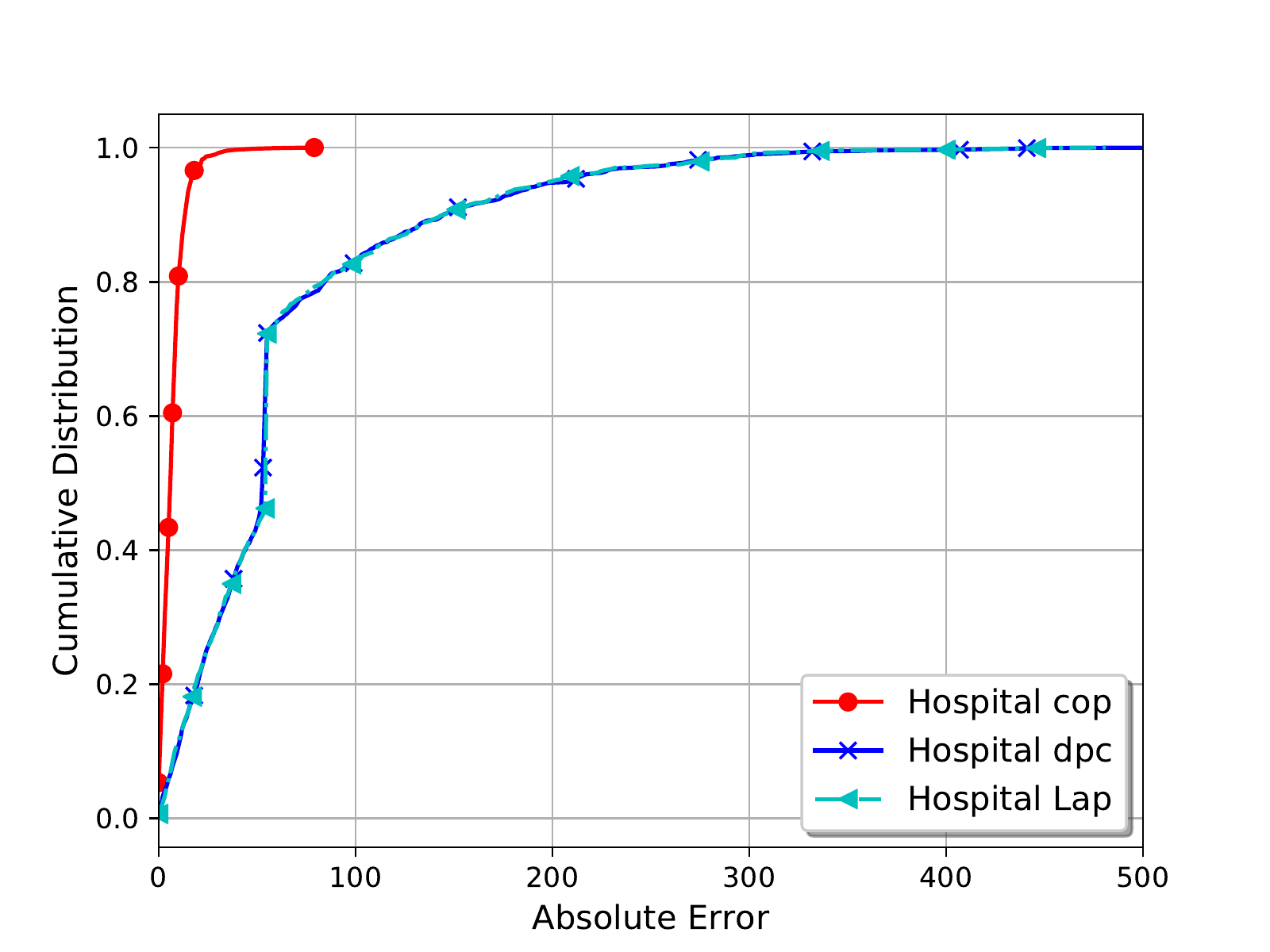}}

\subfigure[Set $Q_2$ (Adult)]{\label{fig:dpcop:two-way:adult}
  \includegraphics[width=0.31\textwidth]{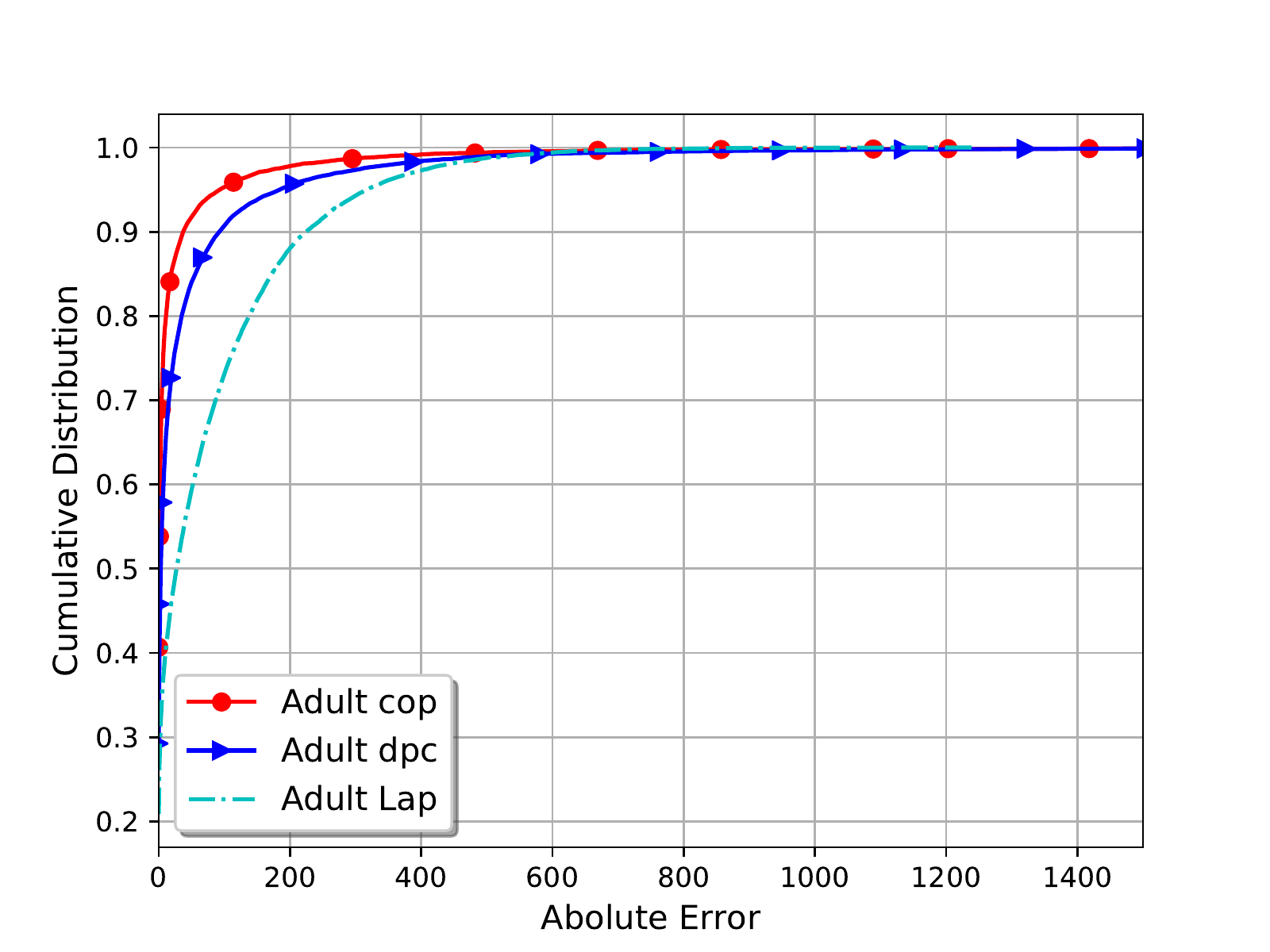}}
\subfigure[Set $Q_2$ ({\dss})]{\label{fig:dpcop:two-way:dss}
  \includegraphics[width=0.31\textwidth]{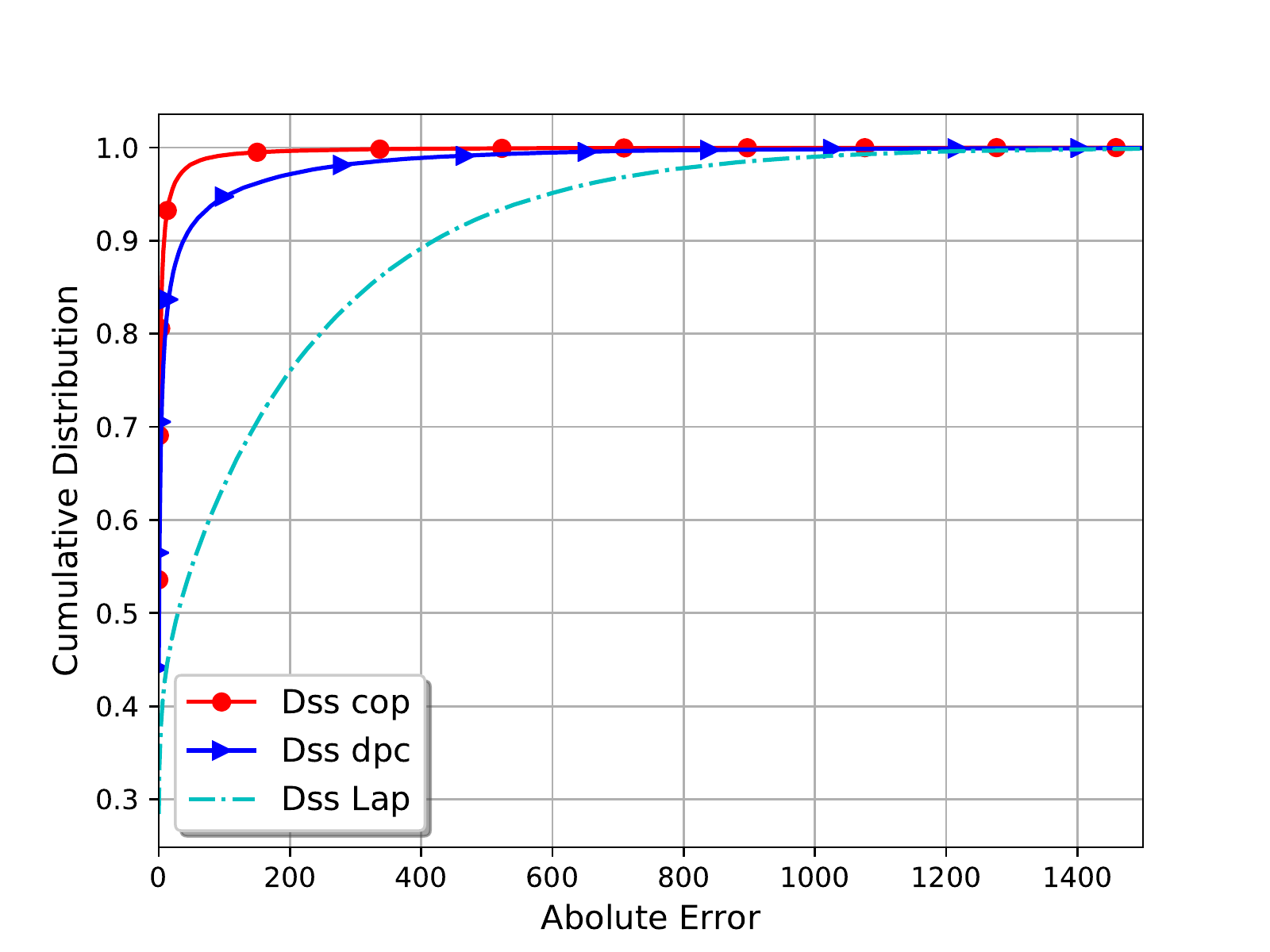}}
  \subfigure[Set $Q_2$ (Hopsital)]{\label{fig:dpcop:two-way:hos}
  \includegraphics[width=0.31\textwidth]{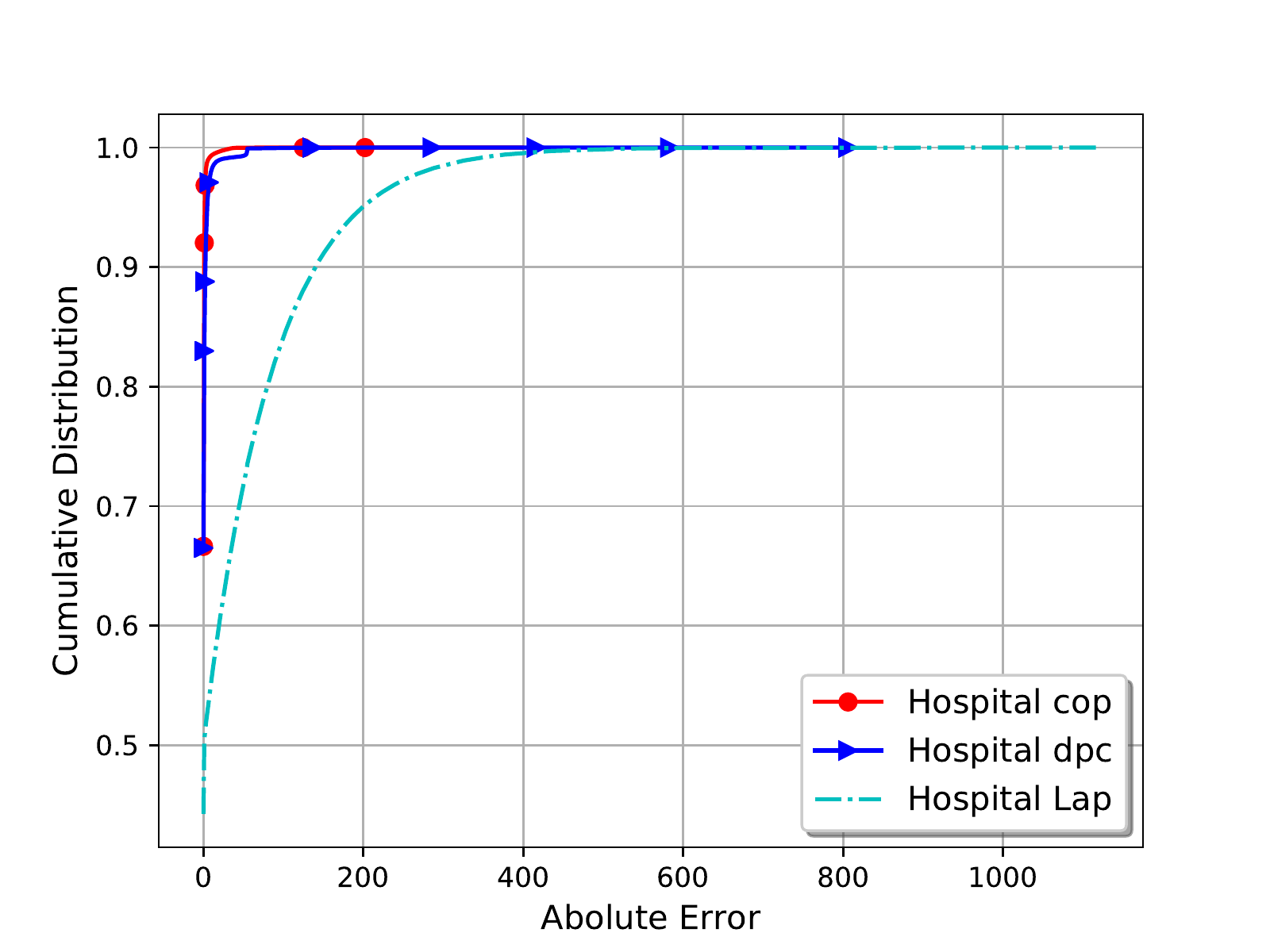}}
\caption{Absolute error over $Q_{12}$ of one-way marginals (set $Q_1$) and two-way positive conjunctions (set $Q_2$) on synthetic versions of the Adult (left), {\dss} (middle) and Hospital (right) datasets with differential privacy.}
\label{fig:dpcop}
\end{figure*}

\begin{table*}
\centering
\resizebox{\textwidth}{!}{
\begin{tabular}{c|c|c|c|c|c|c|c|c|c|c|c|c|c|c|c|c|c|c}
\multirow{3}{*}{Mechanism} & \multicolumn{6}{c}{Adult} & \multicolumn{6}{|c}{{\dss}}  & \multicolumn{6}{|c}{Hospital}  \\
\cline{2-19}
& \multicolumn{2}{c}{95\%} & \multicolumn{2}{|c}{99\%} & \multicolumn{2}{|c}{100\%} & \multicolumn{2}{|c}{95\%} & \multicolumn{2}{|c}{99\%} & \multicolumn{2}{|c}{100\%} & \multicolumn{2}{|c}{95\%} & \multicolumn{2}{|c}{99\%} & \multicolumn{2}{|c}{100\%} \\
\cline{2-19}
& ave & max & ave & max & ave & max & ave & max & ave & max & ave & max & ave & max & ave & max & ave & max \\
 \hline\hline
dpc (one-way) & 92 & 389 & 107 & 482 & 106 & 773 		&  151 & 657 & 175 & 982 & 185 & 1267 & 53 & 208 & 61 & 302 & 65 & 622 \\
Lap (one-way) & 85 & 353 & 99 & 505 & 105 & 741  		&  149 & 621 & 172 & 963 & 182 & 1287 & 52 & 199 & 61 & 299 & 61 & 887 \\
\hline
dpc (two-way) & \cellcolor{ikgreen} 18 & \cellcolor{ikgreen} 184 & \cellcolor{ikgreen} 29 & 504 & \cellcolor{ikgreen} 38 & \cellcolor{ikred} 4788 		&   \cellcolor{ikgreen}   7 & \cellcolor{ikgreen} 108 & \cellcolor{ikgreen} 15 & \cellcolor{ikgreen} 426 & \cellcolor{ikgreen} 23 & \cellcolor{ikred} 5744   & \cellcolor{ikgreen}  1 & \cellcolor{ikgreen} 5 & \cellcolor{ikgreen} 1 & \cellcolor{ikgreen} 22 & \cellcolor{ikgreen} 1 & 836 \\
Lap (two-way) & \cellcolor{ikgreen} 58 & \cellcolor{ikgreen} 317 & \cellcolor{ikgreen} 72 & 536 & \cellcolor{ikgreen} 78 & \cellcolor{ikred} 1246 		&   \cellcolor{ikgreen}  99 & \cellcolor{ikgreen} 594 & \cellcolor{ikgreen} 125 & \cellcolor{ikgreen} 995 & \cellcolor{ikgreen} 136 & \cellcolor{ikred} 2947 & \cellcolor{ikgreen} 31 & \cellcolor{ikgreen} 199 & \cellcolor{ikgreen} 40 & \cellcolor{ikgreen} 333 & \cellcolor{ikgreen} 43 & 1119\\
\hline
dpc (three-way) & \cellcolor{ikgreen} 12 & \cellcolor{ikgreen} 120 & \cellcolor{ikgreen} 20 & \cellcolor{ikgreen} 408 & \cellcolor{ikgreen} 28 & \cellcolor{ikred} 6148   	& 	\cellcolor{ikgreen}  9 & \cellcolor{ikgreen} 98 & \cellcolor{ikgreen} 16 & \cellcolor{ikgreen} 372 & \cellcolor{ikgreen} 36 & 9238      & \cellcolor{ikgreen} 1 & \cellcolor{ikgreen} 3 & \cellcolor{ikgreen} 3 & \cellcolor{ikgreen} 55 & \cellcolor{ikgreen} 3 & \cellcolor{ikgreen} 616 \\
Lap (three-way) & \cellcolor{ikgreen} 102 & \cellcolor{ikgreen} 591 & \cellcolor{ikgreen} 128 & \cellcolor{ikgreen} 1002 & \cellcolor{ikgreen} 138 & \cellcolor{ikred} 2508 &  \cellcolor{ikgreen} 268 & \cellcolor{ikgreen} 1651 & \cellcolor{ikgreen} 342 & \cellcolor{ikgreen} 2805 & \cellcolor{ikgreen} 726 & 9562 & \cellcolor{ikgreen} 45 & \cellcolor{ikgreen} 281 & \cellcolor{ikgreen} 57 & \cellcolor{ikgreen} 477 & \cellcolor{ikgreen} 63 & \cellcolor{ikgreen} 1361 
\end{tabular}
}
\caption{Absolute error $\alpha$ of the dpc and Lap mechanisms on the three datasets. The columns show average and maximum values of $\alpha$ for 95\%, 99\% and 100\% of the queries (corresponding to $\beta = 0.05, 0.01$ and $0.00$, respectively). \textcolor{ikgreen}{\protect\rule[0pt]{2mm}{2mm}} indicates our method significantly outperforms Lap; \textcolor{ikred}{\protect\rule[0pt]{2mm}{2mm}} indicates Lap significantly outperforms our method; significance is defined as an error ratio of approximately 2 or more.}
\label{tab:alpha-values}
\end{table*}

\subsubsection{Results on Three-Way Conjunctions}
Even though our method is expected to perform best on the query set $Q_{12}$, we show that the method performs well on other types of queries as well. For this, we use the set of three-way positive conjunction queries as an example, i.e., $Q_3$. We compare the error against the answers returned from (independent) Laplace mechanism by choosing an appropriate value of $\epsilon'$ for each query in $Q_3$ according to the advanced composition theorem (see Section~\ref{sub:data-and-params}), such that overall $\epsilon$ is just under $1$ for queries in the set $Q_3$ only (i.e., the Laplace mechanism does not compute answers to $Q_{12}$). The results are shown in Figure~\ref{fig:threeway}. Once again our method outperforms the Laplace mechanism for the majority of the queries in $Q_3$ for all three datasets. Looking closely, we see from Table~\ref{tab:alpha-values}, that Lap actually performs better in terms of maximum absolute error for 1\% of the queries in the Adult dataset. However, for majority of the queries, $> 99\%$, over method performs better. In fact, our method outperforms Lap in terms of the 95\% and 99\% error profiles for all three datasets. For the {\dss} dataset the maximum error over all queries is similar to Lap, whereas for the Hospital dataset we again outperform Lap.


\begin{figure*}[!tbh]
\centering
\subfigure[Adult]{\label{fig:adult:threeway}
\includegraphics[width=0.31\textwidth]{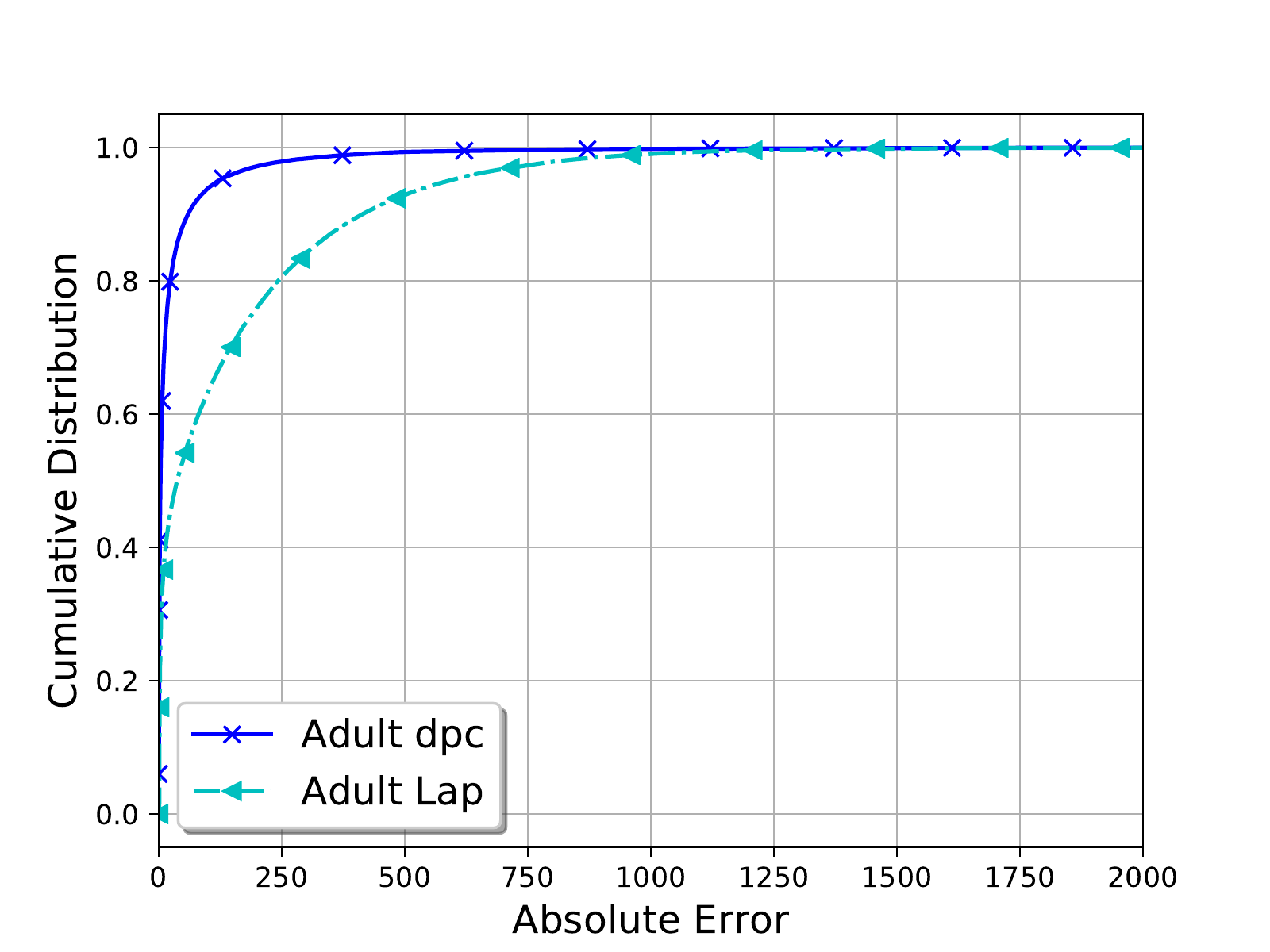}}
\subfigure[{\dss}]{\label{fig:dss:threeway}
\includegraphics[width=0.31\textwidth]{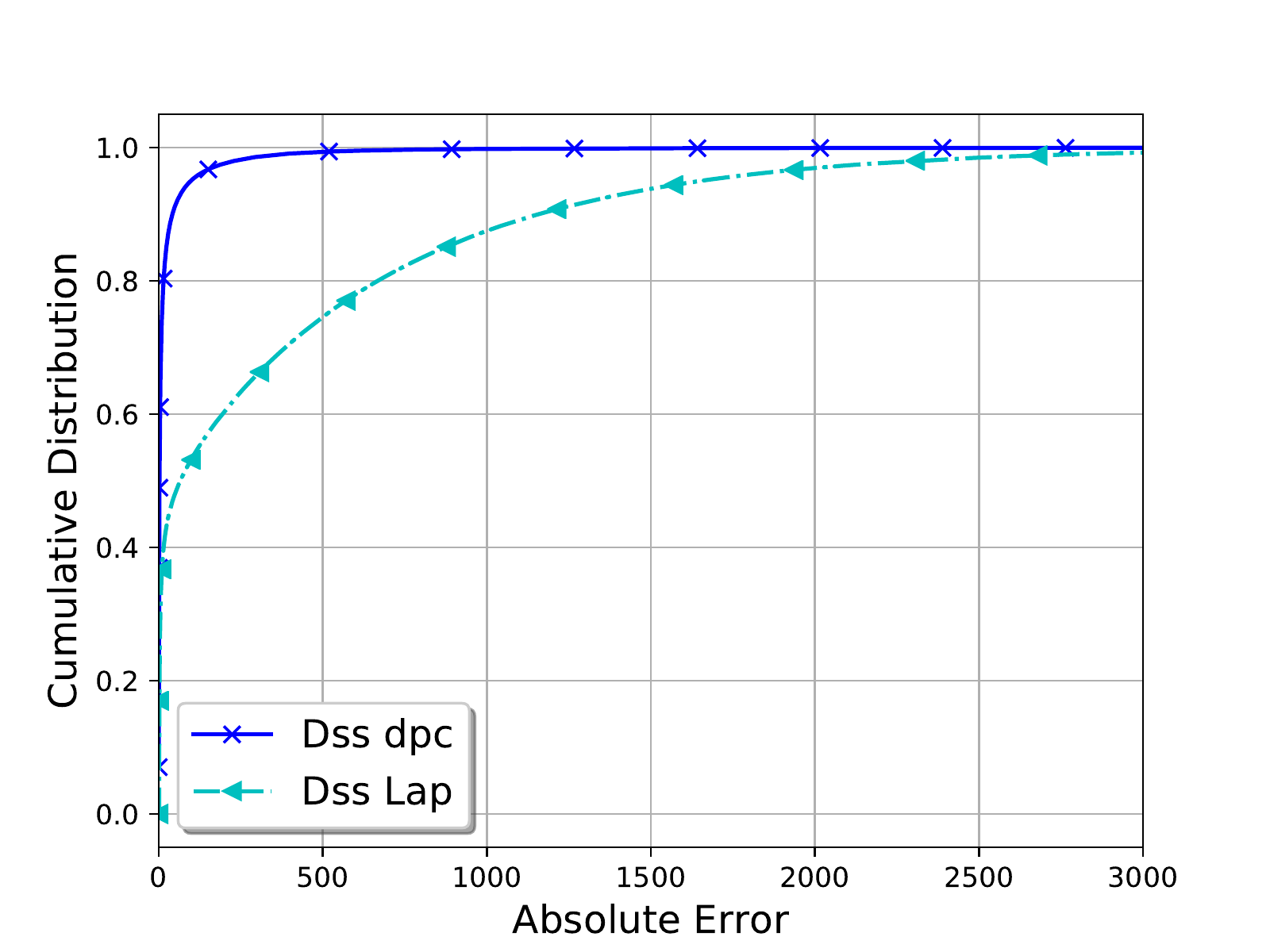}}
\subfigure[{\dss}]{\label{fig:hos:threeway}
\includegraphics[width=0.31\textwidth]{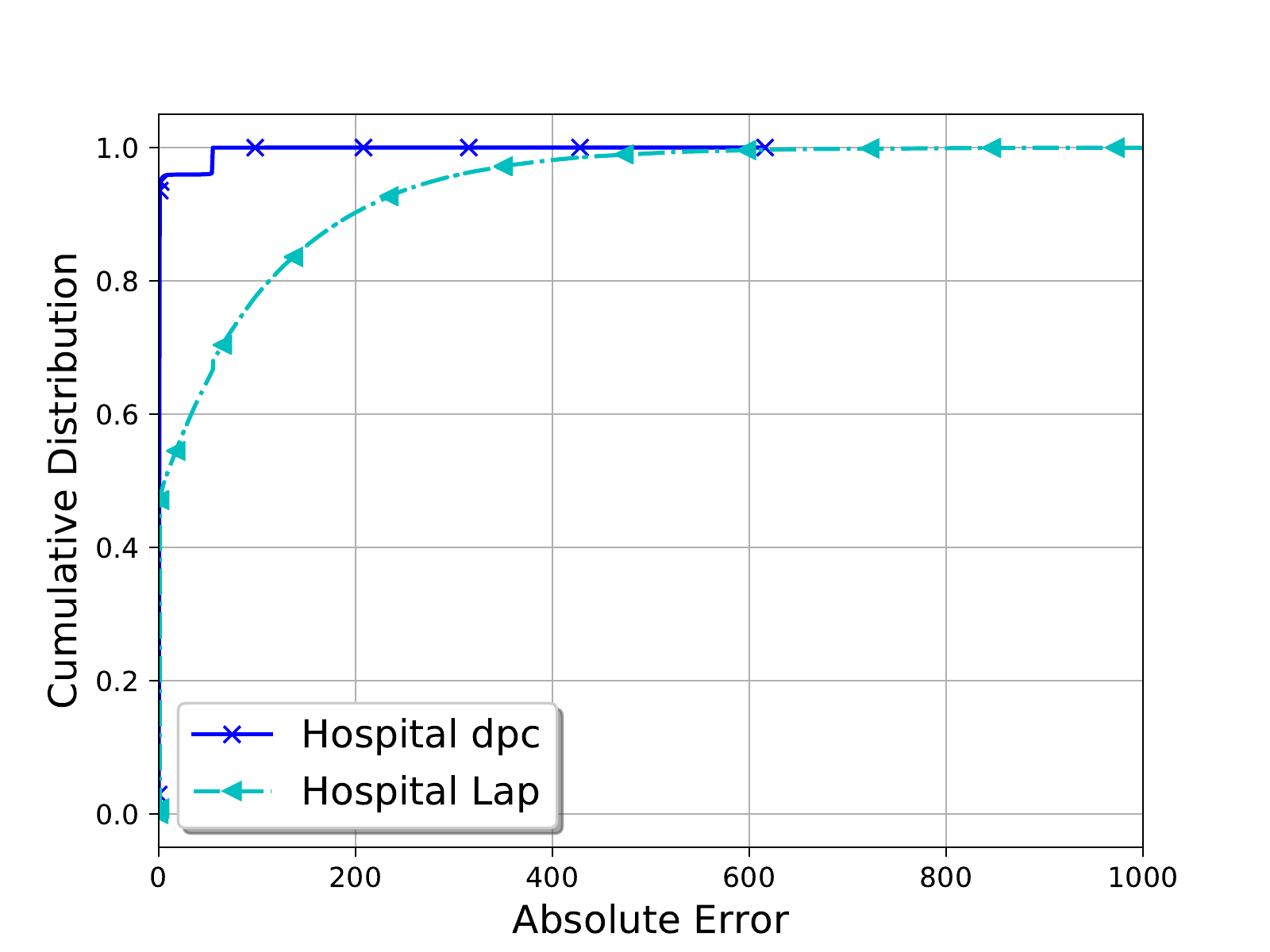}}
\caption{CDFs of the absolute error from our method against the Laplace mechanism on the set $Q_3$ of three-way positive conjunctions on the three datasets.}
\label{fig:threeway}
\end{figure*}

\subsubsection{Effect of the Privacy Parameter}
To show the effect of $\epsilon$ on utility, we vary it from $0.25$ to $5$ and report the error on the set $Q_{12}$. For this, we only use the Adult dataset as the effect is similar on the other two datasets. Figure~\ref{fig:eps} shows the CDF of the absolute error against different values of $\epsilon$. Notice that this is the overall privacy budget. With $\epsilon = 0.25$ we have average and maximum absolute errors of 357 and 3419, respectively, for the set $Q_1$, and 68 and 6421, respectively, for the set $Q_2$. With $\epsilon = 5$, the average and maximum absolute errora are much lower at $41$ and $179$, respectively, for $Q_1$, and 27 and 5882, respectively, for the set $Q_2$. As expected, the error profiles gradually improve as we move from $\epsilon = 0.25$ to $\epsilon = 5$. The error profiles are much similar for the set $Q_2$ then $Q_1$. As discussed in Section~\ref{subsub:cop-no-dp}, since the majority of attributes are uncorrelated, this implies that our method maintains that aspect by not adding too much noise on the set $Q_2$. 

\begin{figure*}[!tbh]
\centering
\subfigure[Set $Q_1$]{\label{fig:eps:adult:oneway}
\includegraphics[width=0.43\textwidth]{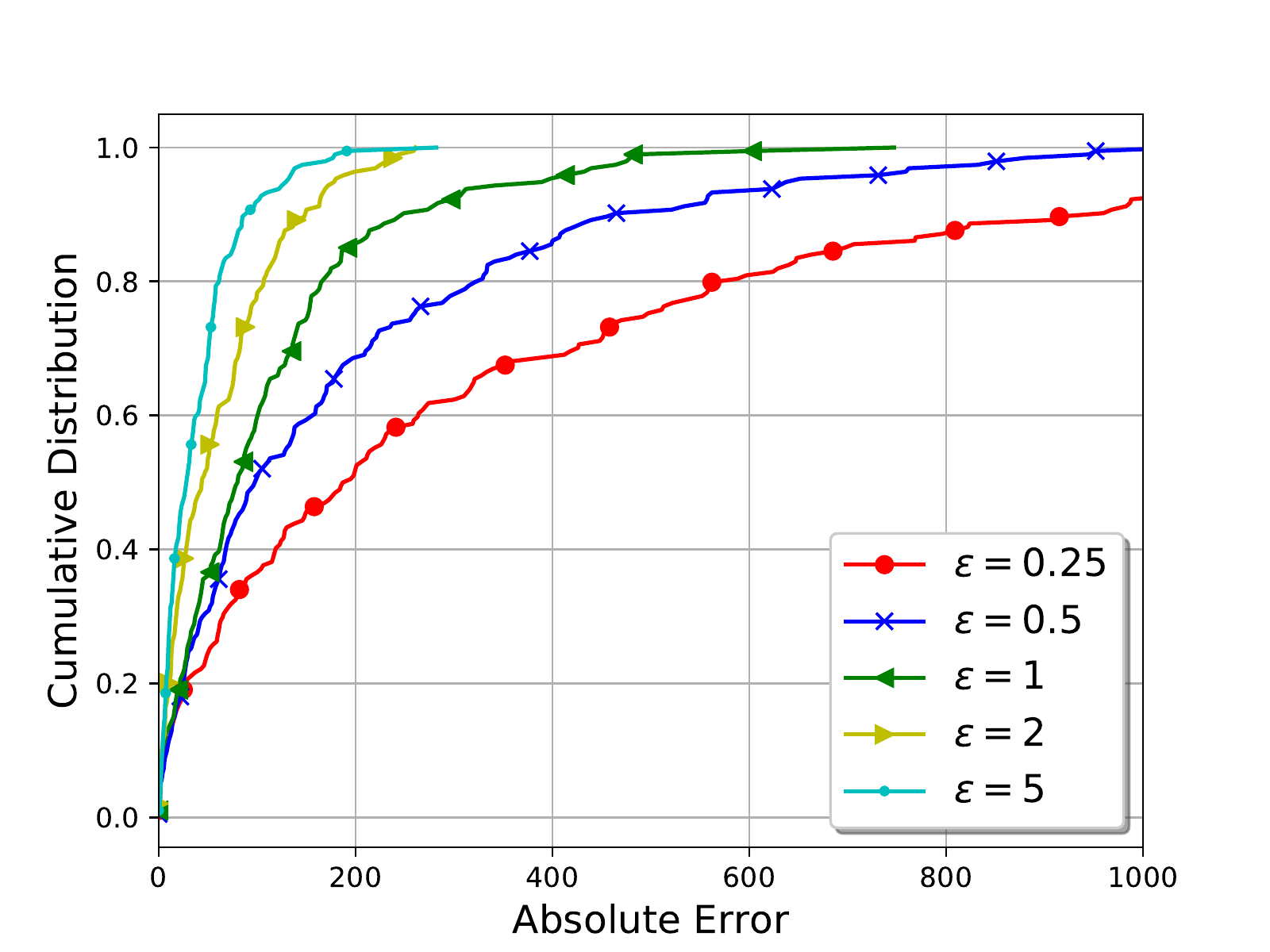}}
\subfigure[Set $Q_2$]{\label{fig:eps:adult:twoway}
\includegraphics[width=0.43\textwidth]{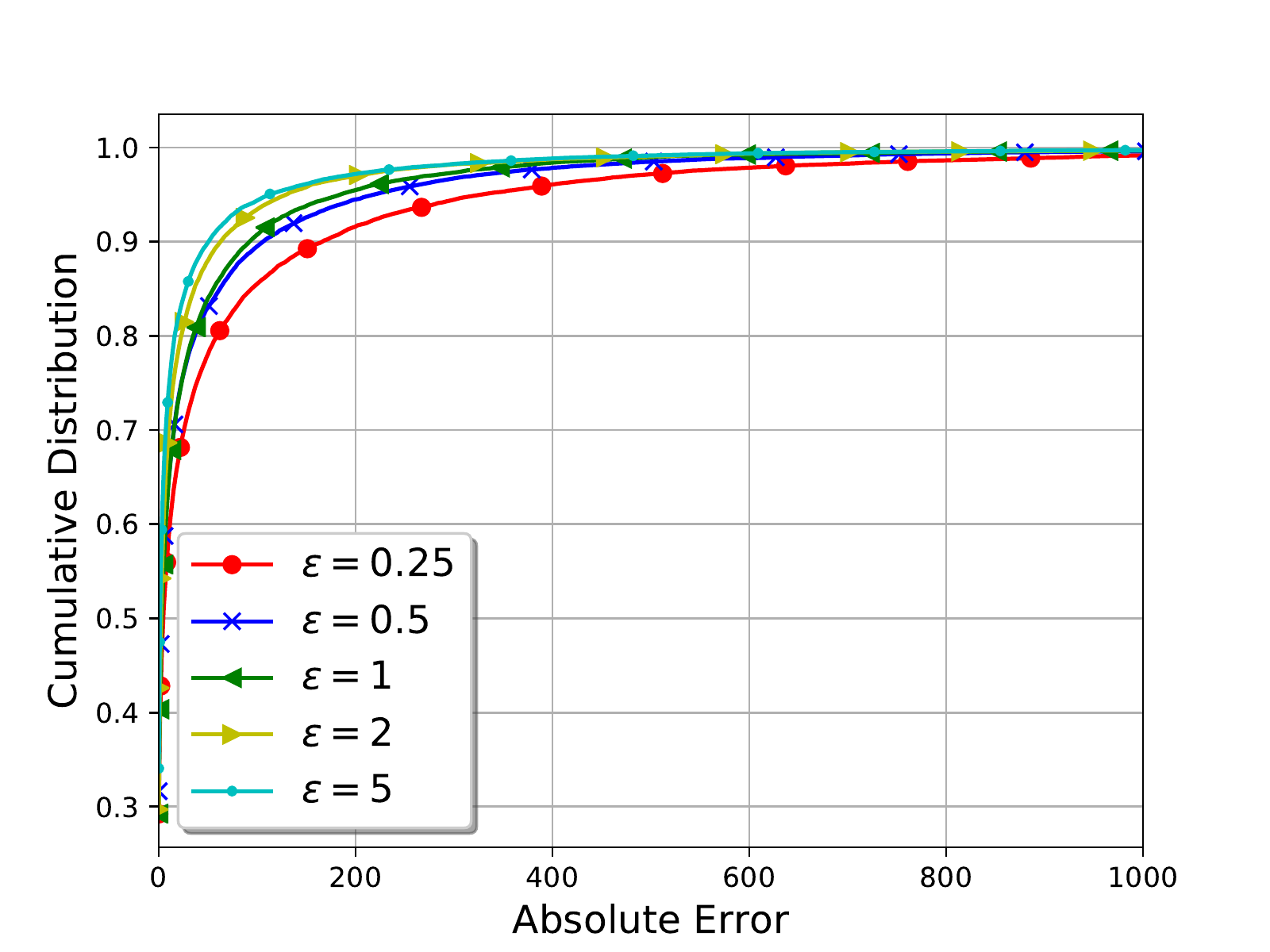}}
\caption{Absolute error on the set $Q_{12}$ against different values of $\epsilon$ on the Adult dataset.}
\label{fig:eps}
\end{figure*}

\subsubsection{Computational Time and Parallelism}
One of the motivations for using the proposed approach is its computational feasibility when the input data is high dimensional. 
We first note that our method is highly parallelizable. In particular, the computation of one-way and two-way positive conjunctions can be done in parallel. For the one-way marginals, parallel computation is straightforward. For the two-way conjunctions, we take the $i$th attribute (in the original dataset) and compute its conjunction with all attributes numbered $i + 1$ to $m$ attributes, for all $i \in [1, m - 1]$, assigning a separate process for each $i$. Obviously, the number of combinations for the first attribute is the highest, and becomes progressively less for latter attributes. Likewise, we also parallelize the computation of $\binom{d}{2} - d$ Gaussian correlations $\rho_{i, j}$ which uses a bisection search. While other components of our method can also be executed in parallel, e.g., generating synthetic records through the copula, we do not do so as these processes did not consume much computational time.

To generate the synthetic datasets we used a single-CPU Intel Xeon E5-2660 2.6GHz server with 10 cores and 128GB memory. Our implementation was done in \texttt{Python}.\footnote{\url{https://www.python.org/}} We parallelized part of our mechanism, as described above. The average run-times (over 10 runs) for the three datasets Adult, {\dss} and Hospital, are shown in Table~\ref{tab:run-times}. Obviously, the run-time is a function of the parameters $m$ (number of original attributes), $d$ (number of binary attributes) and $n$ (the number rows in the dataset). Asymptotically, the run-time of our method is $O(m^2 n + d^{2.38})$. The {\dss} dataset takes the longest time, which is understandable since it is about 3 times bigger in terms of the number of binary attributes and has 150 times more rows than the Adult dataset. If the number of rows is not large, then the run-time is not severely impacted by an increase in the number of binary attributes, as is indicated by the run-times of the Hospital dataset. To further assess the scalability of our algorithm, we constructed two synthetic (fake) datasets Syn 1 and Syn 2 with 1,000 and 2,000 binary attributes, respectively. Both had the same number of rows as the {\dss} dataset. The run times of these two datasets are shown under Syn 1 and Syn 2 in Table~\ref{tab:run-times}. By far, Syn 2 is the largest dataset, and even with this dataset we can generate a private synthetic dataset via our method in around 19 hours. We stress that since synthetic datasets need only be produced once, these times are practical. Thus, our method can output a privacy-preserving synthetic datasets of high dimensional datasets in reasonable time.

\begin{table*}
\centering
\begin{tabular}{c|c|c|c|c|c}
\multirow{2}{*}{Mechanism} & \multirow{2}{*}{Rows} & \multicolumn{2}{c|}{Attributes} & \multirow{2}{*}{Ave. Time} & \multirow{2}{*}{Runs} \\
\cline{3-4}
& & Original & Binary & & \\
\hline\hline
Adult & 32,560 & 14 & 194 & 6 min 47 sec ($\pm 32$ secs) & 10\\
{\dss} & 5,240,260 & 27 & 674 & 2 h 30 min ($\pm 15$ mins) & 10\\
Hospital & 10,000 & 9 & 1,201 & 46 min 21 sec ($\pm 11$ mins) & 10\\
Syn 1 & 5,240,260 & 1,000 & 1,000 & 5 h 43 min ($\pm 4$ mins) & 2\\
Syn 2 & 5,240,260 & 2,000 & 2,000 & 18 h 32 min ($\pm 12$ mins) & 2
\end{tabular}
\caption{Run-time.}
\label{tab:run-times}
\end{table*}

\section{Related Work}
\label{sec:rw}
In line with the theme of the paper, we restrict our review of related work to proposals for generating differentially private synthetic datasets. We divide this into two main categories. The first consists of mechanisms that provide \emph{provable} utility guarantees. The second is a class of algorithms that claims high utility in practice possibly relying on assumptions on the distribution of the input dataset, which we call heuristic approaches. Our method lies in this class. We review the two classes in order.

One way to release a synthetic dataset is to add (independent) Laplace noise to all point functions (Definition~\ref{def:pt}) from the input domain $\domain$~\cite{dp-book}. The resulting dataset gives good answers to point functions but lower order margins are noisier. However, the main problem with this approach is that its runtime is $O(| \domain |)$ which is exponential in the number of attributes; hence, its inapplicability to high dimensional datasets.
The stability-based histogram algorithm~\cite{bun-stable, balcer, salil-tut} runs in time only $O(\log |\domain| )$ by using the notion of local sensitivity and relying on approximate differential privacy. However, for high dimensional datasets it is likely that the output synthetic datasets will only contain a fraction of the original point functions (Definition~\ref{def:pt}), due to a high percentage of rows being unique or having low multiplicity in a high dimensional dataset.

For a more general class of counting queries, i.e., not necessarily point functions, the BLR algorithm~\cite{blr} and the MWEM algorithm~\cite{mwem} allow answers to exponentially many queries with noise per query proportional to $n^{2/3}$ and $n^{1/2}$, respectively ($n$ being the number of rows). However, these algorithms are not efficient as both require time polynomial in $| \domain |$. This makes these algorithms inefficient for high dimensional datasets. The drawback of exponential runtime (in the number of attributes) is also present in the mechanism from~\cite{dp-hard}, the median mechanism~\cite{median}, and the matrix mechanism~\cite{matrix} to name a few. For instance, Privlet~\cite{wavelet}, which can be categorised as an instance of the matrix mechanism, is designed to answer range queries by first creating a full contingency table (frequency matrix) of the input datasets. This is obviously exponential in the number of attributes of the dataset.

Computational inefficiency is not surprising since any synthetic data generation mechanism that answers an arbitrary number of counting queries, or even the set of all two-way marginals, is expected to run in exponential time under the hardness assumption of some well known cryptographic primitives~\cite{pcp-hard, ullman-n2}. However, algorithms that run in exponential-time in theory, might still be efficient in practice. The DualQuery algorithm~\cite{dual-query} is one such algorithm, which approximates a set of given counting queries, say three-way marginals, to within $n^{2/3}$ (absolute) error. The algorithm requires solving an optimization problem, which is hard in theory but solvable in practice for large parameters using standard optimization software. Likewise, the MWEM algorithm can run in reasonable time for a large number of attributes (up to 77 binary attributes) in practice~\cite{mwem}. Both approaches suggest further improvement in run-time using heuristics.

This leads us to the heuristic approaches for synthetic data release. Unlike the above mentioned class of algorithms, this class does not provide a provable utility guarantee and is often accompanied with some heuristic assumption on the input data distribution; crucially, for utility guarantees and not for privacy. As long as the heuristics hold true, the algorithm is expected to produce a synthetic dataset with good utility. The private spatial decomposition technique~\cite{psd} decomposes the input dataset into a hierarchical tree and then answers range queries over this structure. The technique is relevant to spatial data, and does not seem generic enough to consider categorical variables. As we discussed earlier, this requires fixing an artificial order on categorical variables which can be completely arbitrary.
PrivBayes is another algorithm~\cite{privbayes} which constructs a Bayesian network of an input dataset. The Bayesian network maintains attribute correlations and allows to approximate the data distribution as a set of low dimensional marginals. Efficiency is guaranteed so long as the degree of the network is low, where degree is roughly defined as the maximum number of attributes a given attribute depends on in the Bayesian network. The obvious assumption is that most correlations in the input datasets are of low degree. DiffGen~\cite{diffgen} proposes a generalization based approach for releasing data where a hierarchical tree is first constructed and a table corresponding to a given generalization level (in the tree) is released where the generalization level itself is decided by maximising utility through the exponential mechanism~\cite{dp-book}. This implies that the level of generalization of the output data is randomized, thus resulting in different utility on each invocation. This can be a drawback from a usability point-of-view if two datasets on the same domain but, say, different time periods are to be released, each resulting in a different level of generalization. The algorithm also runs in time exponential in the number of attributes.

\subsection*{Comparison with DPCopula}
The closest work to ours is that of Li, Xiong and Jiang~\cite{dp-copula} who propose DPCopula. DPCopula also uses the Gaussian copula to generate differentially private synthetic datasets. However, there are considerable differences between our work and theirs. In essence, we claim that our method is more general and efficient as detailed by the following four major differences.

\subsubsection*{Categorical Attributes.}
DPCopula imposes an order on the values of any
categorical (nominal) attributes in the input data set. However, we argue that this order is inherently artificial and arbitrary: different choices of order for categorical attributes 
produces different pair-wise correlations between them. This in turn effects the accuracy of two-way conjunctions computed on data generated through the Gaussian copula using these pair-wise correlations. 
A simple example illustrates our point. Consider a database having two attributes $X$ (``country of birth'') and $Y$ (``marital status'') with possible values (\texttt{English}, \texttt{Chinese}, \texttt{French}) and (\texttt{Married}, \texttt{Divorced}, \texttt{Widowed}), respectively. 
For the sake of simplicity, assume that the dataset consists of 400 records with 100 pairs of (\texttt{English}, \texttt{Married}), 200 pairs of \texttt{Chinese}, \texttt{Divorced}) and 100 pairs of (\texttt{French}, \texttt{Widowed}). To calculate Pearson correlation between $X$ and $Y$, let us fix the numerical map $(\texttt{English}, \texttt{Chinese}, \texttt{French}) \rightarrow (1, 2, 3)$ on attribute $X$. Consider first the numerical map $(\texttt{Married}, \texttt{Divorced}, \texttt{Widowed}) \rightarrow (1, 2, 3)$ on attribute $Y$. The Pearson correlation between $X$ and $Y$ in this case is exactly 1. However, notice that there is no logical reason to choose any of the two maps. If we change the second map to the (equally valid) map $(\texttt{Married}, \texttt{Divorced}, \texttt{Widowed}) \rightarrow (3, 1, 2)$, the resulting correlation becomes $\approx -0.457$. If we use the resulting correlations to generate synthetic outputs via the Gaussian copula,  we obtain drastically different results on the two-way counts. A simple program in \texttt{R} results\footnote{This is done using the \texttt{rCopula} function from the \texttt{copula} package for \texttt{R}~\cite{rcopula}.} in the two-way counts $\#(\texttt{English}, \texttt{Married}) = 112$, $\#(\texttt{Chinese}, \texttt{Divorced}) = 193$, and $\#(\texttt{French}, \texttt{Widowed}) = 95$ for the first map. The second map results in the counts $\#(\texttt{English}, \texttt{Married}) = 47$, $\#(\texttt{Chinese}, \texttt{Divorced}) = 35$ and $\#(\texttt{French}, \texttt{Widowed}) = 46$, from the synthetic output. This simple example illustrates the impact of an arbitrary order on correlations between categorical attributes in the original data set. While the first ordering gives good results, the second ordering gives noticeably bad results. The reason why the first ordering gives good results is mainly an artefact of the simplicity of illustration. With more attributes, where multiple inter-attribute correlations need to be determined, a utility maximizing ordering across all categorical attributes may not be straightforward. Appendix~\ref{app:cor-art-ord} gives a more analytical treatment on the impact of changing orders (maps) on the correlation. As opposed to DPCopula, our proposed method (Section~\ref{sec:method}) does
not rely on arbitrary orders for nominal attributes. Thus our method is
capable of producing synthetic datasets with pair-wise attribute correlations
that are close to the ones in the original dataset.

\subsubsection*{Small Domain Attributes.} DPCopula is designed only for attributes with large domains, i.e., attributes which have at least 10 different values~\cite[\S 4.4]{dp-copula}. For small domain (including binary attributes) a method called DPHybrid is proposed in~\cite{dp-copula} which partitions the data into smaller datasets (one per attribute value in the small domain attributes) and then generates separate synthetic datasets per partition using Gaussian copulas, before eventually combining them. First, if the dataset has only small domain attributes then DPCopula or its hybrid variant cannot be used. Secondly, depending on the number of small domain attributes DPHybrid can become computationally infeasible, i.e., taking time exponential in the number of small domain attributes. For instance, in our DSS dataset, we have a total of 10 small domain attributes (having number of values less than 10) totalling approximately $2^{18}$ partitions (product of attribute values). Thus, the time to produce the combined synthetic dataset is $2^{18}$ times the time to produce individual synthetic datasets via the Gaussian copula for each partition; which itself takes time $O({m'}^2 n)$, where $m'$ is the number of large domain attributes (17 in the DSS dataset). This amounts to roughly $2^{50}$ time to generate a synthetic dataset from the DSS dataset. With more small domain attributes, this is bound to increase.

\subsubsection*{Correlation Matrix.} A third major difference between our work and DPCopula is in the process to generate the differentially private correlation matrix, i.e., $\mathbf{P}'$ in Eq.~\ref{eq:Q2}. There are two methods described in~\cite{dp-copula} to generate the counterpart to $\mathbf{P}'$. The first method uses Kendall's rank correlation coefficient $\tau$~\cite{kendall} to measure correlations between attributes in the original dataset and then uses the relation $\mathbb{E}(\tau) = \frac{2}{\pi} \sin \rho$ to obtain the correlation coefficient $\rho$ between the corresponding normal random variables. A differentially private variant is constructed by showing that $\tau$ has low global sensitivity~\cite{dp-copula}. We first note that the relation $\mathbb{E}(\tau) = \frac{2}{\pi} \sin \rho$ is proven for continuous random variables~\cite[\S 3.2]{spear-ken},\cite{sgn-expansion}. This is one reason why DPCopula is targeted for continuous data or at least large domain discrete attributes (approximated as continuous attributes). Secondly Kendall's rank correlation coefficient, as the name suggests, assumes an order between attributes; once again, as argued before, for categorical attributes this means that an artificial order needs to be induced which is not reflective of the correlations. Since we convert data into a binary format, there is no meaningful rank between two binary variables that could be used to compute Kendall's tau coefficient. Furthermore, the conversion $\tau = \frac{2}{\pi} \sin \rho$ would not apply as well. The second method used by DPCopula is a maximum likelihood estimation method to compute $\rho$ using a similar ``sample-and-aggregate'' method described in~\cite{what-we-know}. This method involves partitioning the dataset into $n/l$ partitions and then adding Laplace noise of scale ${2 \binom{m}{2}}/{l \epsilon}$ to each of the $\binom{m}{2}$ pairs of attributes. Since our data is in binary format, we would need to add noise of scale ${2 \binom{d}{2}}/{l \epsilon}$. If we do not want the noise to overwhelm the calculation of $\rho$, we need $l$ to be at least $\binom{d}{2}$. Unfortunately, this means that we would have the partitions of size much smaller than $\sqrt{n}$ for all three datasets considered in this paper, which is needed for a good approximation of $\rho$'s~\cite[\S 3.1.2, p. 145]{what-we-know}. We therefore use a different method for constructing the differentially private correlation matrix by adding noise to the margins before obtaining Pearson product-moment correlations and then using a bisection search to convert to corresponding correlations for Gaussian variables.

\subsubsection*{Positive Definite Matrix.}
In order for Cholesky decomposition to work (cf. Section~\ref{subsec:algo_gen_multivariate_norm_distri}), DPCopula uses a heuristic method for obtaining a positive definite matrix with unitary diagonals~\cite{rousseeuw, dp-copula}. Like our method, the procedure first finds the eigen decomposition of the matrix $\mathbf{P}$, i.e., the matrix of Gaussian correlations obtained from differentially private correlation matrix of input data (see Section~\ref{subsec:algo_orig2Gauss_mapping}), and fixes the negative eigenvalues to a small value or the absolute value. The difference from our method is that to make the resulting matrix into a correlation matrix they normalize the matrix. The resulting procedure does indeed return ``a'' correlation matrix. However, this heuristic step does not guarantee that the resulting matrix is the nearest correlation matrix to the input matrix $\mathbf{P}$. By using the algorithm from~\cite{higham2002computing}, we arrive at the nearest correlation matrix to the given matrix as defined by the given matrix norm. 

\section{Conclusion}
\label{sec:conclude}
We have presented a generic mechanism to efficiently output differentially private synthetic datasets with high utility using the concept of Gaussian copulas. Our method is generic; while Gaussian copulas are mostly used to generate (non-private) synthetic datasets for numerical attributes, our methods is applicable to both numerical and categorical attributes alike. The proposed mechanism is efficient as it takes time polynomial in the number of attributes, in contrast to exponential time required by many differentially private synthetic data generation algorithms, which makes our algorithm suitable for high-dimensional datasets. Through experiments on three real-world datasets, we have shown that our mechanism provides high utility, matching and even surpassing the utility provided by independent noise through the Laplace distribution. A shortcoming of our work is the lack of a provable utility guarantee. Nonetheless, we have provided significant experimental evidence of utility. A future direction is to provide theoretical guarantees of utility, perhaps by assuming certain characteristics of the distribution of the input dataset which may make the analysis tractable. A further interesting direction is to assess if other copulas found in literature could also be used to efficiently generate synthetic datasets with high utility.



\bibliographystyle{abbrvnat}
\bibliography{draft2}

\appendix

\section{Artificial Order Disrupts Pearson Correlation}
\label{app:cor-art-ord}
Consider two categorical attributes taking $n$ values each over a database of size $n$. Let us assign the sequential order $1, \ldots, n$ to the $n$ values of the first attribute (say based on lexicographical order). Let us define another order on the second attribute in which the order of the first $\lambda n$ values are reversed. The remaining $n - \lambda n$ values retain the sequential order, where $\lambda \in [0, 1]$. For instance, $\{3, 2, 1, 4, 5\}$ is the order on the second attribute with $\lambda = 0.6$, i.e., the first 3 values have a reverse order. Note that the mean $\mu$ is the same for both orders, given by $\mu = \frac{n+1}{2}$. Let $r_{\lambda}$ be the correlation coefficient between the two attributes, and let $y_i$ denote the $i$th value in the second attribute. Then
\begin{equation}
\label{eq:cor-ord}
r_\lambda = \frac{\sum_{i = 1}^n (i - \mu)(y_i - \mu)}{\sqrt{\sum_{i = 1}^n (i - \mu)^2 \sum_{i = 1}^n (y_i - \mu)^2}}.
\end{equation}
Now consider the denominator in the above. After simplification, we get
\begin{align}
\text{den} &= \sqrt{\sum_{i = 1}^n (i - \mu)^2 \sum_{i = 1}^n (y_i - \mu)^2} \nonumber\\
					&= \sqrt{\sum_{i = 1}^n (i - \mu)^2 \sum_{i = 1}^n (i - \mu)^2} \nonumber\\
					&= \frac{n(n+1)(n-1)}{12}. \label{eq:denom}
\end{align}
Consider now the numerator, which after simplification gives
\begin{equation}
\text{num} = \sum_{i = 1}^n iy_i - \frac{n(n+1)^2}{4}. \label{eq:num}
\end{equation}
Now let $r_0$ be the correlation when the two attributes have the same order, i.e., $\lambda = 0$. We are interested in finding $r_0 - r_\lambda$ as a function of $\lambda$, where different values of $\lambda$ indicate the level of change in the order. Using the fact that $y_i = i$, when $\lambda = 0$, through Eqs.~\ref{eq:denom} and~\ref{eq:num} we obtain
\begin{align}
	r_0 - r_\lambda &= \frac{1}{\text{den}} \cdot \sum_{i = 1}^n (y_i - i) i \nonumber\\
								&= \frac{1}{\text{den}} \cdot \sum_{i = 1}^{\lambda n} (\lambda n - i + 1 - i) i  \nonumber \\
								&= \frac{1}{\text{den}} \left( (\lambda n + 1) \left( \sum_{i = 1}^{\lambda n} i \right) - 2 \left( \sum_{i = 1}^{\lambda n} i^2 \right) \right) \nonumber \\
								&=  \frac{1}{\text{den}} \frac{\lambda n (\lambda n + 1) (\lambda n  - 1)}{6} \nonumber \\
								 &= \frac{12}{n(n+1)(n-1)}  \frac{\lambda n (\lambda n + 1) (\lambda n - 1)}{6} \nonumber \\
													&= 2\lambda^3 \left( 1 - \frac{\lambda^{-1} - 1}{n+1} \right) \left( 1 - \frac{\lambda^{-1} - 1}{n - 1} \right), \label{eq:cor-diff}
\end{align}
where $\lambda \ne 0$ in the last equality. Thus, for instance if $\lambda = 0.5$, we get $r_0 - r_{0.5}  \approx \frac{1}{4}$. And when $\lambda = 1$, i.e., complete reversal of order, we get the difference as $2$. Since $r_0 = 1$, this means that $r_1 = -1$, a complete reversal in correlation.

\section{Proof of Lemma~\ref{lem:perturb_CDF}}
\label{app:lem:perturb_CDF}
Note that conversion of $A$ into $X_j$'s creates $|A|$ distinct partitions of the domain $\domain$. From the parallel composition theorem, i.e., Theorem~\ref{the:par-comp}, since each marginal is computed with $(\epsilon'_i, 0)$-differential privacy, the overall differential privacy guarantee remains $(\epsilon'_i, 0)$.

Another way of looking at this is as follows. Suppose, instead of converting $A$ into binary attributes, we compute its marginal distribution directly from the histogram of the values $A$ takes, where each histogram bin corresponds to the number of occurrences of a unique value of attribute $A$. Since each row of $D$ can only be in one of the bins, the private version of the histogram can be obtained by adding Laplace noise of scale $2/\epsilon'_i$ to each count, and then publishing the counts. The resulting mechanism remains $(\epsilon'_i, 0)$-differentially private~\cite[\S 3.3, p. 33]{dp-book}. Now, we can convert $A$ to binary attributes and deduce the marginals of these binary attributes from the histogram counts. This does not further impact privacy, as it is simply post-processing (See Theorem~\ref{the:post-proc}).

\section{Pearson Correlation over Binary Attributes has High Global Sensitivity}
\label{app:cor-high-sen}
Consider an $n$-row binary database $D_1$ having two attributes $X$ and $Y$ with only the first entry in each attribute set to 1 and the rest to 0, i.e., $X_1 = Y_1 = 1$ and $X_i = Y_i = 0$ for all $i \in \{2, \ldots, n\}$. Consider the neighbouring database $D_2$ which is the same as $D_1$ except that $X_2 = 1$. Let $r_1$ be the correlation coefficient between $x$ and $y$ in $D_1$, and let $r_2$ be its counterpart in $D_2$. We will show that $r_1 - r_2$ is large, meaning that the correlation coefficient has high global sensitivity and any noise scaled to the sensitivity of the correlation coefficient will overwhelm the accuracy of the results. Let $\overline{X}$ and $\overline{Y}$ denote the mean of the attributes $X$ and $Y$, respectively. We have
\begin{align}
r_1 &= \frac{\sum_{i = 1}^n X_i Y_i  - n \overline{X} \cdot \overline{Y}}{ \sqrt{\sum_{i = 1}^n X_i^2 - n\overline{X}^2 }  \sqrt{\sum_{i = 1}^n Y_i^2 - n\overline{Y}^2 }} \nonumber\\
& = \frac{1 - n \cdot \frac{1}{n}\frac{1}{n}}{\sqrt{1 - n \cdot \frac{1}{n^2}} \sqrt{1 - n \cdot \frac{1}{n^2}}} \nonumber\\
& = \sqrt{1 - \frac{1}{n}}. \label{eq:gs:r1}
\end{align}
Similarly,
\begin{align}
r_2 &= \frac{1 - \frac{2}{n}}{\sqrt{2 - \frac{4}{n}} \sqrt{1 - \frac{1}{n}}} \nonumber\\
& = \frac{\sqrt{1 - \frac{2}{n}}}{\sqrt{1 - \frac{1}{n}}} \times \frac{1}{\sqrt{2}}. \label{eq:gs:r2}
\end{align}
From Eqs.~\ref{eq:gs:r1} and~\ref{eq:gs:r2}, with large enough $n$, we get
\begin{equation*}
r_1 - r_2 \approx 1 - \frac{1}{\sqrt{2}} \approx 0.29
\end{equation*}
Thus, the global sensitivity of the Pearson product moment correlation for binary attributes is at least 0.29. Adding Laplace noise scaled to this will substantially alter the correlation between attributes and hence the corresponding counts.

\section{Computing Two Way Margins in \texorpdfstring{$O(m^2 n)$}{O(m2n)} time}
\label{app:faster-two-way}
A straightforward way to compute all two-way margins over the binary dataset is as follows: for each pair of binary attributes $(B_1, B_2)$ scan the dataset of $n$ rows and record the number of times each possible value $(b_1, b_2)$ occurs, where $b_1, b_2 \in \{0, 1\}$. However, this takes time proportional to $O(d^2 n)$, which can be prohibitive if $d$ is large. We will show a method below that requires time only $O(m^2 n + d^2)$. Recall that $m$ is the number of attributes in the original dataset $D$, and $d$ in its binary expansion $D_{\mathsf{B}}$.

First we compute one-way margins for an attribute $A$ by scanning the database and creating a new hash entry for any new entry $a \in A$ and updating its count in the hash table, all in $O(n)$ time. We then go through each element $a$ in the hash table for $A$, letting $b = \text{bin}(a)$ be its binary representation, and creating the corresponding binary version of the hash entry $\text{hist}_{\mathsf{B}}(b)$. This can be done in $O(dn)$ time.

Now for each possible pairs of attributes $A_1, A_2$ in the database $D$, we initialise another hash table: ``$\text{hist}$.'' If we see a new pair of values $(a_1, a_2)$, we create the entry $\text{hist}(a_1, a_2)$ and set it to 1. Otherwise we increment the counter. Now for each existing value $(a_1, a_2)$, we set $b_1 = \text{bin}(a_1)$ and $b_2 = \text{bin}(a_2)$. Let $b_1b_2$, $b_1\overline{b}_2$, $\overline{b}_1b_2$ and $\overline{b}_1\overline{b}_2$ denote the number of occurrences of $(1, 1)$, $(1, 0)$, $(0, 1)$ and $(0, 0)$, respectively. Then, these can be computed as
\begin{align*}
	b_1b_2 &= \text{hist}(a_1, a_2), \\
	b_1\overline{b}_2 &=  \text{hist}_{\mathsf{B}}(b_1) - \text{hist}(a_1, a_2), \\
	\overline{b}_1b_2 &= \text{hist}_{\mathsf{B}}(b_2) - \text{hist}(a_1, a_2), \\
	\overline{b}_1\overline{b}_2 &= n - b_1b_2 - b_1\overline{b}_2 - \overline{b}_1b_2.
\end{align*}
It is easy to see that the above can be computed in $O(m^2 n + d^2)$ time.

\end{document}